\newif\ifEditMode
\tikzset{
->, 
node distance=3cm, 
every state/.style={thick, fill=gray!10}, 
initial text=$ $, 
}
\newcommand{\myparagraph}[1]{\noindent\emph{#1.}}
\newcommand{\setunion}[0]{\boldsymbol{\cup}}
\newcommand{\setint}[0]{\boldsymbol{\cap}}
\newcommand{\meet}[0]{\wedge}
\newcommand{\bigmeet}[0]{\bigwedge}
\newcommand{\join}[0]{\vee}
\newcommand{\bigjoin}[0]{\bigvee}
\newcommand{\setArg}[2]{\left\{ {#1} \,\,\middle|\,\, {#2} \right\}}
\newcommand{\cmp}[0]{M}
\newcommand{\cmpcompositionId}[0]{\mathbf{Id}}
\newcommand{\cmpcompositionZero}[0]{\mathbf{0}}
\newcommand{\cmpEnv}[0]{E}
\newcommand{\cmpUniverse}[0]{\mathbb{M}}
\newcommand{\behaviorUnv}[0]{\mathcal{B}}
\newcommand{\cmpComposition}[0]{\parallel}
\newcommand{\cmpSet}[0]{H}
\newcommand{\compset}[0]{compset\xspace}
\newcommand{\compsets}[0]{compsets\xspace}
\newcommand{\Compsets}[0]{Compsets\xspace}
\newcommand{\envSet}[0]{\mathcal{E}}
\newcommand{\impSet}[0]{\mathcal{I}}
\newcommand{\sysSet}[0]{\mathcal{S}}
\newcommand{\satImpl}[0]{\models^{I}}
\newcommand{\satEnv}[0]{\models^{E}}
\newcommand{\cmpQuotient}[0]{/}
\newcommand{\intAut}[0]{A}
\newcommand{\maxcomp}[1]{\overline{#1}}
\newcommand{\transition}[2]{\stackrel{#1}{\to_{#2}}}
\newcommand{\defined}[0]{\stackrel{\text{def}}{=}}
\newcommand{\inSpace}[0]{\mathcal{X}}
\newcommand{\outSpace}[0]{\mathcal{Y}}
\newcommand{\cont}{\mathcal{C}}
\newcommand{\contComposition}[0]{\parallel}
\newcommand{\contQuotient}[0]{/}
\newcommand{\recip}[1]{{#1}^{-1}}
\newcommand{\catCmpSet}[0]{\mathbf{CmpSet}}
\newcommand{\catContract}[0]{\mathbf{Contr}}
\renewcommand{\paragraph}[1]{\noindent\textbf{#1}}
\newcommand{\alphabet}{\Sigma}
\newcommand{\langSet}{\mathcal{L}}
\newcommand{\estring}{\epsilon}
\newcommand{\concat}[0]{\circ}
\newcommand{\pre}[0]{\text{Pre}}
\newcommand{\missingExt}[0]{\text{MissExt}}
\newcommand{\uncont}[0]{\text{Unc}}
\newcommand{\intAutClass}[0]{\mathcal{A}}
\newcommand{\lang}[1]{\ell\left(#1\right)}
\newcommand{\iaEnv}[1]{E_{\lang{#1}}}
\newcommand{\iaImp}[1]{M_{\lang{#1}}}
\newcommand{\maximal}{conic\xspace}
\newcommand{\conic}{conic\xspace}
\newcommand{\Conic}{Conic\xspace}
\begin{document}
\title{Hypercontracts}






\author{Inigo Incer\inst{1}
\and
Albert Benveniste\inst{2}
\and
Alberto Sangiovanni-Vincentelli\inst{1}
\and 
Sanjit Seshia\inst{1}
}
\authorrunning{I. Incer et al.}
%
\institute{University of California, Berkeley, USA
\and
INRIA/IRISA, Rennes, France
}

%


%

\maketitle
\begin{abstract}
Contract 
theories have been proposed to formally support
distributed and decentralized system design while ensuring safe system
integration. 
In this paper we propose \emph{hypercontracts}, a generic model with
a richer structure for its underlying model of components, subsuming simulation
preorders. While this new model remains generic, it provides a much more elegant
and richer algebra for its key notions of refinement, parallel composition, and
quotient, and it allows inclusion of new operations. On top of these foundations,
we propose \emph{\conic hypercontracts}, which are still generic but come with a
finite description. 
\end{abstract}

\section{Introduction}
The need for compositional algebraic frameworks to design and analyze
Cyber-Physical Systems is widely accepted. The aim is to support distributed
and decentralized system design based on a proper definition of
\emph{interfaces} supporting the specification of subsystems having a partially
specified context of operation, and subsequently guaranteeing safe system
integration. Over the last few decades, we have seen the introduction of several
formalisms to do this: interface automata \cite{alfaroHenzingerIntAutomata,AlfaroH01,DoyenHJP08,LuttgenV13_LMCS,BujtorV14_Sofsem},
process spaces \cite{NegulescuProcessSpaces}, modal interfaces
\cite{LarsenNW06,LarsenNW07CONCUR,LarsenNW07ESOP,modalInterfaces,DBLP:journals/scp/BauerLLNW14}, assume-guarantee (AG) contracts \cite{Benveniste2008},
rely-guarantee reasoning~\cite{Jones83,Jones2003,ColemanJ07,DBLP:conf/setss/HayesJ17}, and variants of these.
The interface specifications state $(i)$ what the component guarantees and $(ii)$ what
it assumes from its environment in order for those guarantees to hold, i.e.,
all these frameworks implement a form of assume-guarantee reasoning.

These algebraic frameworks have a notion of a component, of an environment, and
of a specification, also called contract to stress the give-and-take dynamics
between the component and its environment. They all have notions of satisfaction
of a specification by a component, and of contract composition. Among the
various contract theories, assume-guarantee contracts require users to
 state the assumptions and guarantees of the specification explicitly, while
interface theories express a specification as a game played between the
specification environments and implementations. Experience tells that
engineers in industry find  the explicit expression of a contract's
assumptions and guarantees natural (see \cite{BenvenisteContractBook} chapter 12), while
interface theories are perceived as a less intuitive mechanism for writing
specifications; however, interface theories in general come with the most
efficient algorithms, making them excellent candidates for internal
representations of specifications. Some authors (\cite{BenvenisteContractBook}
chapter 10) have therefore proposed to translate contracts expressed as pairs
(assumptions, guarantees) into some interface model, where algorithms are
applied. This approach has the drawback that results cannot be traced back to the
original  (assumptions, guarantees) formulation.

The most basic definition of a property in the
formal methods community is ``a set of traces.'' This notion is based on the
behavioral approach to system modelling: we assume we start with a set of
behaviors $\behaviorUnv$, and properties are defined as subsets of $\behaviorUnv$.
In this approach, design elements or components are also defined as subsets of
$\behaviorUnv$. The difference between components and properties is semantics: a
component collects the behaviors that can be observed from that component, while
a property collects the behaviors meeting some criterion of interest. We say a
component $M$ satisfies a property $P$, written $M \models P$, when $M \subseteq
P$, that is, when the behaviors of $M$ meet the criterion that determines $P$.
Properties of this sort are also called \emph{trace properties}. Many design
qualities are of this type, such as \emph{safety}. But there are many system attributes
that can only be determined by analyzing multiple traces such as mean response times,
security attributes, and reliability. This suggests the need for a richer
formalism for expressing design attributes: hyperproperties.

\emph{Hyperproperties} are subsets of $2^\behaviorUnv$. A component $\cmp$
satisfies a hyperproperty $\cmpSet$ if $\cmp \in \cmpSet$. Since hyperproperties
allow us to define exactly what components satisfy them, we can define them
using any number of behaviors of a component (as opposed to trace properties
which can only predicate about single traces). 
One of our contributions is an assume-guarantee theory that supports
the expression of arbitrary hyperproperties. As we present our theory, we will
use the following running example.

\paragraph{Running example:}
Consider the
digital system shown in Figure \ref{fg:hbasak}; this system is similar to those
presented in \cite{Rabe_2016,10.1007/978-3-319-99725-4_17} to illustrate the
non-interference property in security. Here, we have an $s$-bit secret data input $S$ and an $n$-bit public
input $P$. The system has an output $O$. There is also an input $H$ that is
equal to zero when the system is being accessed by a user with low-privileges,
i.e., a user not allowed to use the secret data, and equal to one otherwise. We wish
the overall system to satisfy the property that for all environments with $H =
0$, the implementations can only make the output $O$ depend on $P$, the public
data, not on the secret input $S$. To see why a trace property cannot capture this requirement, suppose for simplicity that all variables are 1-bit-long. A trace property that refines the required non-interference property is
$P = \{(H= 0, P = 1, S = 1, O = 1), (H= 0, P = 0, S = 1, O = 0),
(H= 0, P = 1, S = 0, O = 1), (H= 0, P = 0, S = 0, O = 0)\}$. A valid implementation $M$ of $P$ is
$M = \{(H= 0, P = 1, S = 1, O = 1), (H= 0, P = 0, S = 0, O = 0)\}$, but the component $M$ leaks the value of $S$ in its output. We conclude that non-interference does not behave as a trace property.
In our development, we will use hypercontracts first to express
this top-level assume-guarantee requirement, and then to find a component that added to a
partial implementation of the system results in a design that meets the
top-level spec.

\emph{Non-interference}, introduced by Goguen and Meseguer \cite{6234468}, is a
common information-flow attribute, a prototypical
example of a design quality which trace properties are unable to
capture~\cite{clarkson2010hyperproperties}. It can be expressed with
hyperproperties, and is in fact one reason behind their introduction.

Suppose $\sigma$ is one of the behaviors that our system can display,
understood as the state of memory locations through time. Some of those memory locations we call \emph{privileged}, some \emph{unprivileged}. 
Let $L_0(\sigma)$ and $L_f(\sigma)$ be the projections of the behavior $\sigma$
to the unprivileged memory locations of the system, at time zero, and at the final time (when execution is done). We say that a component $M$ meets the non-interference hyperproperty when
\[
    \forall \sigma,
    \sigma' \in M.\,\, L_0(\sigma) = L_0(\sigma') \Rightarrow
    L_f(\sigma) = L_f(\sigma'),
\]
i.e., if two traces begin with the unprivileged locations in the same state, the final state of the unprivileged locations matches.

Non-interference is a downward-closed hyperproperty \cite{Rabe_2016,10.1007/978-3-319-99725-4_17}, and a $2$-safety hyperproperty---hyperproperties
called \emph{$k$-safety} are those for the refutation of which one must provide
at least $k$ traces. In our example, to refute the hyperproperty, it suffices to
show two traces that share the same unprivileged initial state, but which differ
in the unprivileged final state.

\begin{figure}[t]
    \begin{subfigure}{.31\columnwidth}
        \centering
        \includegraphics[width=0.9\columnwidth]{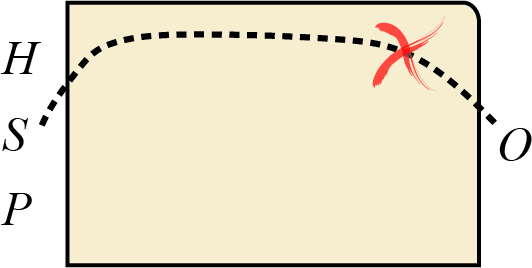}
        \caption{}
        \label{fg:hbasak}
    \end{subfigure}
    \begin{subfigure}{.31\columnwidth}
        \centering
        \includegraphics[width=0.9\columnwidth]{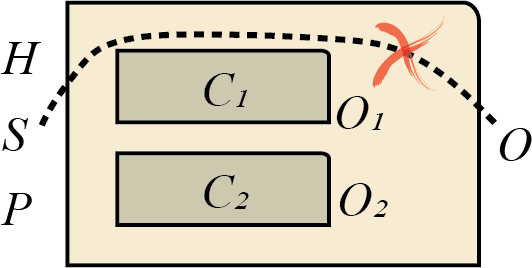}
        \caption{}
        \label{fg:alPM}
    \end{subfigure}
    \begin{subfigure}{.31\columnwidth}
        \centering
        \includegraphics[width=0.9\columnwidth]{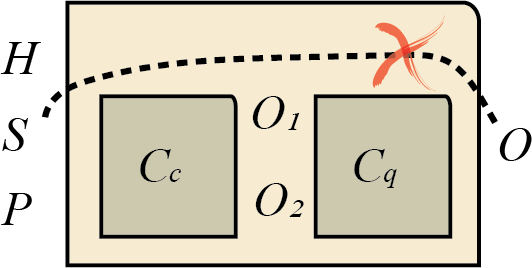}
        \caption{}
        \label{fg:kabhgoi}
    \end{subfigure}
    \caption{(a) A digital system with a secret input $S$ and a public input $P$. The overall
    system must meet the requirement that the secret input does not affect the value of the
    output $O$ when the signal $H$ is deasserted (this signal is asserted when a privileged
    user uses the system). Our agenda for this running example is the following: (b) we will start with two components satisfying hypercontracts $\cont_1$ and $\cont_2$ characterizing information-flow properties of their own;
    (c) their composition $\cont_c$ will be derived.
    Through the quotient $\cont_q$, we will discover the functionality that needs to be added in order
    for the design to meet the top-level information-flow spec $\cont$.}
    \label{nlakudbai}
\end{figure}

\paragraph{Abstract contract theories.} 
Since many contract frameworks have been proposed, there have also been efforts to
systematize this knowledge by building high-level theories of which existing
contract theories are instantiations. Thus, Bauer et al. \cite{specToContract}
describe how to build a contract theory if one has a specification theory
available. Benveniste et al. \cite{BenvenisteContractBook} provide a meta-theory
that builds contracts starting from an algebra of components. They provide
several operations on contracts and show how this meta-theory can describe,
among others, interface automata, assume-guarantee contracts, modal interfaces, and
rely-guarantee reasoning. This meta-theory is, however, low-level, specifying
contracts as unstructured sets of environments and implementations. As a
consequence, important concepts such as parallel composition and quotient of
contracts are expressed in 
terms that are considered too abstract---see~\cite{BenvenisteContractBook}, chapter 4. For example, no closed
form formula is given for the quotient besides its abstract definition as
adjoint of parallel composition.

\paragraph{Contributions.}
In this paper, we provide a theory of contracts, called \emph{hypercontracts},
that addresses the above deficiencies by requiring more structure in definition
of components, environments, and implementations. This additional structure does
not restrict the applicability of our theory, however. This theory is built in
three stages. We begin with a theory of components. Then we state what are the
sets of components that our theory can express; we call such objects \compsets,
which are equivalent to hyperproperties in behavioral
formalisms~\cite{10.1007/978-3-319-99725-4_17}. From these \compsets, we build
hypercontracts. We provide closed-form expressions for hypercontract
manipulations. Then we show how the hypercontract theory applies to two specific
cases: downward-closed hypercontracts and interface hypercontracts (equivalent
to interface automata). The main difference between hypercontracts and the
meta-theory of contracts is that hypercontracts are more structured: the
meta-theory of contracts defines a theory of components, and uses these
components in order to define contracts. Hypercontracts use the theory of
components to define compsets, which are the types of properties that we are
interested in representing in a specific theory. Hypercontracts are built out of
compsets, not out of components.

To summarize, our contributions are the following: $(i)$ a new model of \emph{hypercontracts}
possessing a richer algebra than the metatheory
of~\cite{BenvenisteContractBook} and capable of expressing any lattice of hyperproperties and $(ii)$ a calculus of \emph{\conic
hypercontracts} offering finite representations of downward-closed
hypercontracts.


\section{Preliminaries}

\paragraph{Preorders.} Many concepts in this paper will be inherited from preorders. We recall that a
preorder $(P, \le)$ consists of a set $P$ and a relation $\le$ which is
transitive (i.e., $a \le b$ and $b \le c$ implies that $a \le c$ for all $a, b, c
\in P$) and reflexive ($a \le a$ for all $a \in P$). A partial order is a
preorder whose relation is also antisymmetric (i.e., from $a \le b$ and $b \le
a$ we conclude that $a = b$). 

Our preorders will come equipped with a partial binary operation called
composition, usually denoted $\times$. Composition is often understood as a
means of connecting elements together and is assumed to be monotonic in the
preorder, i.e., we assume composing with bigger elements yields bigger results:
$
    \forall a, b , c \in P.\,\, a \le b \Rightarrow a\times c \le b \times c.
$
We will also be interested in taking elements apart. For a notion of
composition, we can always ask the question, for $a, b \in P$, what is
the largest element $b \in P$ such that $a \times b \le c$? Such an element is
called \emph{quotient} or \emph{residual}, usually denoted $c / a$. Formally,
the definition of the quotient $c / a$ is
\begin{align}\label{eq:quotientDef}
    \forall b \in P.\,\, a \times b \le c \text{ if and only if } b \le c/a,
\end{align}
which means that the quotient is the right adjoint of composition (in the sense
of category theory). A synonym of this notion is to say that composing by a
fixed element $a$ (i.e., $b \mapsto a \times b$) and taking quotient by the same
element (i.e., $c \mapsto c / a$) form a Galois connection. A description of the
use of the quotient in many fields of engineering and computer science is given
in \cite{EPTCS326.14}.


A partial order for which every two elements have a well-defined LUB (aka join),
denoted $\join$, and GLB (aka meet), denoted $\meet$, is a lattice. A lattice in
which the meet has a right adjoint is called Heyting algebra. This right adjoint
usually goes by the name exponential, denoted $\to$. In other words, the
exponential is the notion of quotient if we take composition to be given by the
meet, that is, for a Heyting algebra $H$ with elements $a, c$, the exponential
is defined as
\begin{align}\label{o8tibobno}
    \forall b \in H.\,\, a \meet b \le c \text{ if and only if } b \le a \to c,
\end{align}
which is the familiar notion of implication in Boolean algebras.

\section{The theory of hypercontracts}

Our objective is to develop a theory of assume-guarantee reasoning for any kind
of attribute of Cyber-Physical Systems. We do this in three steps: 
\begin{enumerate}
    \item  
we consider components coming with notions of preorder (e.g., simulation) and parallel composition; 
\item 
we discuss the notion of a \compset\ and give it substantial algebraic structure---unlike the unstructured sets of components considered in the metatheory of~\cite{BenvenisteContractBook};
\item
we build hypercontracts as pairs of \compsets with additional structure---capturing environments and implementations. 
\end{enumerate}
In this section we describe how this construction is performed, and in the next
we show specialized hypercontract theories.

\subsection{Components}

In the theory of hypercontracts, the most primitive concept is the component.
Let $(\cmpUniverse, \le)$ be a preorder. The elements $\cmp \in \cmpUniverse$
are called \emph{components}. We say that $\cmp$ is a subcomponent of $\cmp'$
when $\cmp \le \cmp'$. If we represented components as automata, the statement
``is a subcomponent of'' is equivalent to ``is simulated by.''

There exists a partial binary operation, $\cmpComposition: \cmpUniverse,
\cmpUniverse \to \cmpUniverse$, monotonic in both arguments, called
\emph{composition}. If $\cmp\cmpComposition\cmp'$ is not defined, we say that
$\cmp$ and $\cmp'$ are \textit{non-composable} (and \textit{composable} otherwise). A component
$\cmpEnv$ is an environment for component $\cmp$ if $\cmpEnv$ and $\cmp$ are
composable. We assume that composition is associative and commutative.

Similarly, we assume the existence of a second, partial binary operation that is
the right adjoint of composition: the \textit{quotient} \eqref{eq:quotientDef}
for the component theory. Given two components $\cmp$ and $\cmp'$, the quotient,
denoted $\cmp \cmpQuotient \cmp'$, is the largest component $\cmp''$ satisfying
$\cmp' \cmpComposition \cmp'' \le \cmp$. In other words, it gives us the largest
component whose composition with $\cmp'$ is a subcomponent of $\cmp$.

\smallskip
\paragraph{Running example.} In order to reason about possible decompositions of the system shown in Figure \ref{fg:hbasak}, we introduce the internal variables $O_1$ and $O_2$, as shown in Figure \ref{fg:alPM}. They have lengths $o_1$ and $o_2$, respectively. The output $O$ has length $o$.
For simplicity, we will assume that the behaviors of the entire system are stateless. In that case, the set of components $\cmpUniverse$ is the union of the following sets:
\begin{itemize}[wide=0pt]
    \item For $i \in \{1, 2\}$, components with inputs $H$, $S$, $P$, and output $O_i$, i.e., the sets $\{(H, S, P, O_1, O_2, O) \,\,|\,\, \exists f \in (2^1\times 2^s \times 2^n \to 2^{o_i}).\, O_i = f(H, S, P)\}$.
    \item Components with inputs $H$, $S$, $P$, $O_1$, $O_2$, and output $O$, i.e., the set $\{(H,\linebreak[1] S,\linebreak[1] P,\linebreak[1] O_1,\linebreak[1] O_2,\linebreak[1] O) \,\,|\,\, \exists f \in (2^1\times 2^s \times 2^n \times 2^{o_1} \times 2^{o_2} \to 2^o).\, O = f(H, S, P, O_1, O_2)\}$. We also consider components any subset of these components, as these correspond to restricting inputs to subsets of their domains.
\end{itemize}
In this theory of components, composition is carried out via set intersection. So for example, if for $i \in \{1,2\}$ we have functions $f_i \in (2^1\times 2^s \times 2^n \to 2^{o_i})$ and components $M_i = \{(H, S, P, O_1, O_2, O) \,\,|\,\, O_i = f_i(H, S, P)\}$, the composition of these objects is $$M_1 \cmpComposition M_2 = \left\{(H, S, P, O_1, O_2, O) \,\left| \begin{array}{r} O_1 = f_1(H, S, P) \\ O_2 = f_2(H, S, P)\end{array}\!\!\!\right.\right\}$$ which is the set intersection of the components's behaviors.


\subsection{\Compsets}

$\catCmpSet$ is a lattice whose objects are sets of components, called \emph{\compsets}. Thus, {compsets} are equivalent to hyperproperties when the underlying component theory represents components as sets of behaviors. 
In general, not every set of components is necessarily an object of $\catCmpSet$.

$\catCmpSet$ comes with a notion of satisfaction. Suppose $\cmp \in
\cmpUniverse$ and $\cmpSet$ is a \compset. We say that $\cmp$ \emph{satisfies}
$\cmpSet$ or conforms to $\cmpSet$, written $\cmp \models \cmpSet$, when $\cmp
\in \cmpSet$. For \compsets $\cmpSet, \cmpSet'$, we say that $\cmpSet$ \emph{refines}
$\cmpSet'$, written $\cmpSet \le \cmpSet'$, when $\cmp \models \cmpSet
\Rightarrow \cmp \models \cmpSet',$ i.e., when $\cmpSet \subseteq \cmpSet'$.

Since we assume $\catCmpSet$ is a lattice, the greatest lower bounds and least
upper bounds of finite sets are defined. Observe, however, that although the
partial order of $\catCmpSet$ is given by subsetting, the meet and join of
$\catCmpSet$ are not necessarily intersection and union, respectively, as the
union or intersection of any two elements are not necessarily elements of
$\catCmpSet$. 

\paragraph{Running example.} We are particularly interested in writing non-interference
specifications. Regarding the system shown in Figure \ref{fg:hbasak}, we require the top level component to generate the output $O$ independently from the secret input $S$. We build our theory of compsets by letting the set $2^{\cmpUniverse}$
be the set of elements of $\catCmpSet$. This means that any set of components is a
valid \compset. 
The components meeting the top-level non-interference property are those belonging to the compset $\{(H, S, P, O_1, O_2, O)\,\,|\,\, \exists f \in (2^1 \times 2^{n} \to 2^o).\, O =
f(H, P)\}$, i.e., those components for which $H$ and $P$ are sufficient to evaluate $O$.
This corresponds exactly to those components that are insensitive to the secret input $S$.
The join and meet of these compsets is given by set union and intersection, respectively.

\subsubsection{Composition and quotient}

We extend the notion of composition to $\catCmpSet$:
{\small
\begin{align}\label{eq:hypComp}\cmpSet \cmpComposition \cmpSet' =
    \setArg{\cmp \cmpComposition \cmp'}{
        \begin{aligned}
            & \cmp
\models \cmpSet \text{, } \cmp' \models \cmpSet' \text{, and} \\ &
\text{$\cmp$ and
$\cmp'$ are composable}\end{aligned}
}.\end{align}}
Composition is total and monotonic, i.e., if $\cmpSet' \le \cmpSet''$, then
$\cmpSet \cmpComposition \cmpSet' \le \cmpSet\cmpComposition\cmpSet''$. It is
also commutative and associative, by the commutativity and associativity,
respectively, of component composition.

We assume the existence of a second (but partial) binary operation on the
objects of $\catCmpSet$. This operation is the right adjoint of composition: for
\compsets $\cmpSet$ and $\cmpSet'$, the residual $\cmpSet\contQuotient\cmpSet'$
(also called \emph{quotient}), is defined by the universal property
\eqref{eq:quotientDef}. From the definition of composition, we must have
\begin{align}\label{eq:hypQuot}
    \cmpSet\contQuotient\cmpSet' = \setArg{\cmp \in \cmpUniverse}{ \{\cmp\} \cmpComposition \cmpSet' \subseteq \cmpSet }.
\end{align}

\subsubsection{Downward-closed \compsets}

The set of components was introduced with a partial order. We say that a
\compset $\cmpSet$ is \emph{downward-closed} when $\cmp' \le \cmp$ and $\cmp
\models \cmpSet$ imply $\cmp' \models \cmpSet$, i.e., if a component
satisfies a downward-closed \compset, so does its subcomponent. Section
\ref{sc:ixoiagbg} treats downward-closed \compsets in detail.

\subsection{Hypercontracts}
\myparagraph{Hypercontracts as pairs (environments, closed-system specification)}
A hypercontract is a specification for a design element that tells what is
required from the design element when it operates in an environment that meets
the expectations of the hypercontract. A hypercontract is thus a pair of
\compsets: \[\cont = (\envSet, \sysSet) =(\mbox{environments, closed-system specification}).\]
$\envSet$ states the environments in which the object being specified must
adhere to the specification. $\sysSet$ states the requirements that the design
element must fulfill when operating in an environment which meets the
expectations of the hypercontract. We say that a component $\cmpEnv$ \emph{is an
environment of hypercontract} $\cont$, written $\cmpEnv\satEnv\cont$, if
$\cmpEnv\models\envSet$. We say that a component $\cmp$ \emph{is an
implementation of} $\cont$, written $\cmp\satImpl\cont$, when $\cmp
\cmpComposition \cmpEnv \models \sysSet \text{ for all }\cmpEnv \models
\envSet.$ We thus define the set of implementations $\impSet$ of $\cont$ as the
\compset containing all implementations, i.e., as the quotient:
\[
 \mbox{implementations} =  \impSet = \sysSet \contQuotient \envSet.
\]
A hypercontract with a nonempty set of environments is called \emph{compatible};
if it has a nonempty set of implementations, it is called \emph{consistent}. For
$\sysSet$ and $\impSet$ as above, the \compset $\envSet'$ defined as $\envSet'=
\sysSet \contQuotient \impSet$ contains all environments in which the
implementations of $\cont$ satisfy the specifications of the hypercontract.
Thus, we say that a hypercontract is saturated if its environments \compset is as
large as possible in the sense that adding more environments to the
hypercontract would reduce its implementations. This means that $\cont$
satisfies the following fixpoint equation:
\[
    \envSet = \sysSet / \impSet = \sysSet / (\sysSet \contQuotient \envSet).
\]
\myparagraph{Hypercontracts as pairs (environments, implementations)}
Another way to interpret a hypercontract is by telling explicitly which
environments and implementations it supports. Thus, we would write the
hypercontract as $\cont = (\envSet, \impSet)$. We will see that assume-guarantee
theories can differ as to what is the most convenient representation for their
hypercontracts. 

\smallskip

\myparagraph{The lattice $\catContract$ of hypercontracts}
Just as with $\catCmpSet$, we define $\catContract$ as a lattice
formed by putting together two \compsets in one of the above two ways. Not every pair of
\compsets is necessarily a valid hypercontract. We will define soon the
operations that give rise to this lattice.

\subsubsection{Preorder}

We define a preorder on hypercontracts as follows: we say that $\cont$
\emph{refines} $\cont'$, written $\cont \le \cont'$, when every environment of
$\cont'$ is an environment of $\cont$, and every implementation of $\cont$ is an
implementation of $\cont'$, i.e., $\cmpEnv \satEnv \cont' \Rightarrow \cmpEnv
\satEnv \cont \text{ and }  \cmp \satImpl \cont \Rightarrow \cmp \satImpl
\cont'$. We can express this as
\begin{align*}
&\envSet' \le \envSet \text{ and } 
\sysSet \cmpQuotient \envSet = \impSet \le \impSet' = \sysSet' \cmpQuotient \envSet'.
\end{align*}
Any two $\cont, \cont'$ with $\cont \le \cont'$ and $\cont' \le \cont$ are said
to be \emph{equivalent} since they have the same environments and the same
implementations. We now obtain some operations using preorders which are defined
as the LUB or GLB of $\catContract$. We point out that the expressions we obtain
are unique up to the preorder, i.e., up to hypercontract equivalence.

\smallskip

\subsubsection{GLB and LUB} From the preorder just defined, the GLB of $\cont$
and $\cont'$ satisfies:
\mbox{$
    \cmp \satImpl \cont \meet \cont'$} { if and only if }
   $ \cmp \satImpl \cont \text{ and } \cmp \satImpl \cont'
$;
and
$
    \cmpEnv \satEnv \cont \meet \cont' \text{ if and only if }
    \cmpEnv \satEnv \cont \text{ or } \cmpEnv \satEnv \cont'
$.

Conversely, the least upper bound satisfies
$
    \cmp \satImpl \cont \join \cont' \text{ if and only if }
    \cmp \satImpl \cont \text{ or } \cmp \satImpl \cont'
$,
and
\mbox{$
    \cmpEnv \satEnv \cont \join \cont'$} { if and only if }
    $\cmpEnv \satEnv \cont$ { and } $\cmpEnv \satEnv \cont'.
$

The lattice $\catContract$ has hypercontracts for objects (up to contract equivalence), and meet and join as
just described.

\subsubsection{Parallel composition}\label{khabq187qb} The composition of
hypercontracts $\cont_i = (\envSet_i, \impSet_i)$ for $1 \le i \le n$,
denoted $\parallel_i \cont_i$, is the smallest hypercontract $\cont'
= (\envSet', \impSet')$ (up to equivalence) meeting the following
requirements:
\begin{itemize}[wide=0pt]
    \item any composition of implementations of all $\cont_i$ is an
    implementation of $\cont'$; and
    \item for any $1 \le j \le n$, any composition of an environment of $\cont'$
    with implementations of all $\cont_i$ (for $i \ne j$) yields
    an environment for $\cont_j$.
\end{itemize}
These requirements were stated for the first time by Abadi and Lamport
\cite{AbadiLamportComposingSpecs}. Using our notation, this composition principle becomes
{\footnotesize
\begin{align}
    \nonumber
    \cont & \contComposition \cont' = \bigwedge
    \setArg{\begin{aligned} & (\envSet', \impSet') \\ & \in \catContract \end{aligned}}{
    \left[\begin{array}{ll}
        \impSet_1 \cmpComposition \ldots \cmpComposition \impSet_n \le \impSet', \text{ and}\\
        \envSet' \cmpComposition \impSet_1 \cmpComposition \ldots \cmpComposition \hat \impSet_j \cmpComposition \ldots \cmpComposition \impSet_n  \le \envSet_j \\ 
        \text{for all } 1 \le j \le n
    \end{array}\right]} \\
    \label{eq:cntComp} &= 
    \bigwedge
    \setArg{\begin{aligned} & (\envSet', \impSet') \\ & \in \catContract \end{aligned}}{
    \left[\begin{array}{ll}
        \impSet_1 \cmpComposition \ldots \cmpComposition \impSet_n \le \impSet', \text{ and}\\
        \envSet' \le \bigwedge_{1 \le j \le n} \frac{\envSet_j}{\impSet_1 \cmpComposition \ldots \cmpComposition \hat \impSet_j \cmpComposition \ldots \cmpComposition \impSet_n}
    \end{array}\right]
    },
\end{align}
}where the notation $\hat \impSet_j$ indicates that the composition $\impSet_1 \cmpComposition \ldots \cmpComposition \hat \impSet_j \cmpComposition \ldots \cmpComposition \impSet_n$ includes all terms $\impSet_i$, except for $\impSet_j$.

\paragraph{Running example.}
Coming back to the example shown in Figure \ref{nlakudbai}, we want to state a requirement for the top-level component that for
all environments with $H = 0$, the implementations can only make the output $O$
depend on $P$, the public data. We will write a hypercontract for the top-level.
We let  $\cont = (\envSet, \impSet)$, where
\begin{align*}
    \envSet =& \setArg{M \in \cmpUniverse}{\forall (H, S, P, O_1, O_2, O) \in M.\, H = 0}
    \\
    \impSet =&  \{M \in \cmpUniverse \,\,|\,\, \exists f {\in} (2^n \to 2^o).  \forall (H, S, P, O_1, O_2, O) \in M.\, H = 0 \rightarrow O {=} f(P) \}.
\end{align*}
The environments are all those components only defined for $H = 0$. The implementations are those such that the output is a function of $P$ when $H = 0$.

Let $f^*: 2^n \to 2^o$. Suppose we have two hypercontracts that require their implementations to satisfy the function $O_i = f^*(P)$, one implements it when $S = 0$, and the other when $S \ne 0$. For simplicity of syntax, let $s_1$ and $s_2$ be the propositions $S = 0$ and $S \ne 0$, respectively.
Let the two hypercontracts be $\cont_i = (\envSet_i, \impSet_i)$ for $i \in \{1,2\}$. We won't place restrictions on the environments for these hypercontracts, so we obtain $\envSet_i = \cmpUniverse$ and
\begin{align*}
    \impSet_i = &\{M \in \cmpUniverse \,\,|\,\, \forall (H, S, P, O_1, O_2, O) \in M. s_i \rightarrow O_i {=} f^*(P) \}.
\end{align*}

We now evaluate the composition of these two hypercontracts: $\cont_c=\cont_1\contComposition \cont_2 = (\envSet_c \,,\, \impSet_c)$, yielding $\envSet_c = \cmpUniverse$ and
\begin{align*}
    \impSet_c = &\{M \in \cmpUniverse \,\,|\,\, \forall (H, S, P, O_1, O_2, O) \in M. \\ & (s_1 \rightarrow O_1 {=} f^*(P)) \land (s_2 \rightarrow O_2 {=} f^*(P)) \}.
\end{align*}

\subsubsection{Mirror or reciprocal}

We assume we have an additional operation on hypercontracts, called both mirror
and reciprocal, which flips the environments and implementations of a
hypercontract: $\cont^{-1} = (\envSet, \impSet)^{-1} = (\impSet, \envSet)$ and
$\cont^{-1} = (\envSet, \sysSet)^{-1} = (\sysSet / \envSet, \sysSet)$. This
notion gives us, so to say, the hypercontract obeyed by the environment. The
introduction of this operation assumes that for every hypercontract $\cont$, its
reciprocal is also an element of $\catContract$. Moreover, we assume that, when
the infimum of a collection of hypercontracts exists, the following identity
holds:
\begin{align}\label{eq:recipOptIdentity}\begin{array}{c}
      \recip{\left(\bigmeet_i \cont_i\right)} = \bigjoin_i \recip{\cont_i}.
\end{array}
\end{align}


\subsubsection{Hypercontract quotient}

The \emph{quotient} or residual for hypercontracts $\cont = (\envSet, \impSet)$ and
$\cont'' = (\envSet'', \impSet'')$, written $\cont'' \contQuotient \cont$, has
the universal property \eqref{eq:quotientDef}, namely $\forall \cont'.\,\, \cont
\contComposition \cont' \le \cont'' $ if and only if
$\cont' \le \cont'' \cmpQuotient \cont$.
We can obtain a closed-form expression using the reciprocal:

\begin{proposition}\label{iaxbisoudy} The hypercontract quotient obeys $\cont'' / \cont = \left( (\cont'')^{-1} \contComposition \cont
    \right)^{-1}$.
\end{proposition}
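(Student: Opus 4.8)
The plan is to verify that $\bigl((\cont'')^{-1}\contComposition\cont\bigr)^{-1}$ enjoys the universal property \eqref{eq:quotientDef} that defines $\cont''\contQuotient\cont$, i.e.\ that for every hypercontract $\cont'$,
\[
\cont\contComposition\cont' \le \cont'' \quad\Longleftrightarrow\quad \cont' \le \bigl((\cont'')^{-1}\contComposition\cont\bigr)^{-1}.
\]
Since a right adjoint of $\contComposition$ is unique up to equivalence, establishing this equivalence proves the proposition (read, as usual here, up to equivalence of hypercontracts).

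First I would record two elementary facts about the mirror. It is an involution, $(\cont^{-1})^{-1}=\cont$, directly from $(\envSet,\impSet)^{-1}=(\impSet,\envSet)$; and it is order\nobreakdash-reversing, $\cont_1\le\cont_2 \Leftrightarrow (\cont_2)^{-1}\le(\cont_1)^{-1}$, since the refinement order on $\catContract$ (smaller hypercontracts have more environments and fewer implementations) is turned into its opposite when the two \compsets are swapped. Using these facts together with commutativity of $\contComposition$, the right\nobreakdash-hand inequality above rewrites step by step: applying the mirror (which reverses the order and is involutive) turns $\cont' \le \bigl((\cont'')^{-1}\contComposition\cont\bigr)^{-1}$ into $(\cont'')^{-1}\contComposition\cont \le (\cont')^{-1}$, and commutativity turns this into $\cont\contComposition(\cont'')^{-1}\le(\cont')^{-1}$. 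So the whole claim reduces to the ``rotation'' equivalence
\[
\cont\contComposition\cont' \le \cont'' \quad\Longleftrightarrow\quad \cont\contComposition(\cont'')^{-1} \le (\cont')^{-1}.
\]

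To prove this rotation equivalence I would appeal to the explicit description \eqref{eq:cntComp}: $\cont\contComposition\cont'$ is the least hypercontract meeting the three Abadi--Lamport inequalities, and the family of hypercontracts meeting them is upward closed (a short check from monotonicity of $\cmpComposition$ on \compsets and, for the second form in \eqref{eq:cntComp}, from Proposition~\ref{jbaoxysbkjs}). Hence $\cont\contComposition\cont'\le\cont''$ holds precisely when $\cont''=(\envSet'',\impSet'')$ itself satisfies $\impSet\cmpComposition\impSet'\le\impSet''$, $\envSet''\cmpComposition\impSet\le\envSet'$, and $\envSet''\cmpComposition\impSet'\le\envSet$. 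Writing out the same three conditions for $\cont\contComposition(\cont'')^{-1}\le(\cont')^{-1}$---remembering that $(\cont'')^{-1}$ has environments $\impSet''$ and implementations $\envSet''$, and $(\cont')^{-1}$ has environments $\impSet'$ and implementations $\envSet'$---yields $\impSet\cmpComposition\envSet''\le\envSet'$, $\impSet'\cmpComposition\impSet\le\impSet''$, and $\impSet'\cmpComposition\envSet''\le\envSet$, which is literally the same triple up to commutativity of component composition. The two inequalities are therefore equivalent, which proves the rotation equivalence and hence the proposition. (Alternatively, one can reach the conclusion by applying the identity \eqref{eq:recipOptIdentity} directly to the infimum \eqref{eq:cntComp} and matching it against a closed form of the quotient, but the universal\nobreakdash-property route above seems cleaner.)

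The only genuinely delicate point is the reduction ``$\cont\contComposition\cont'\le\cont''$ iff $\cont''$ meets the three Abadi--Lamport inequalities'': it relies on that family being upward closed and on the least such hypercontract existing in $\catContract$, which is exactly how $\contComposition$ was defined. Everything else is bookkeeping---tracking which \compset plays the role of environments versus implementations after each application of the mirror, and observing that the Abadi--Lamport triple is invariant, after using commutativity of $\cmpComposition$, under the relabeling the mirror induces.
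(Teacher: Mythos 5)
Your proof is correct, and it reaches the result by a route that differs in packaging from the paper's. The paper computes $\cont''\contQuotient\cont$ directly from its definition as the supremum $\bigvee\setArg{\cont'}{\cont\contComposition\cont'\le\cont''}$: it characterizes that solution set by the three Abadi--Lamport inequalities, inserts a double mirror, invokes the assumed identity \eqref{eq:recipOptIdentity} to convert the join into the mirror of a meet, and then recognizes that meet as $(\cont'')^{-1}\contComposition\cont$ by matching the inequalities against \eqref{eq:cntComp}. You instead verify the universal property \eqref{eq:quotientDef} for the candidate $\left( (\cont'')^{-1}\contComposition\cont\right)^{-1}$, using only that the mirror is an antitone involution (which does follow from $(\envSet,\impSet)^{-1}=(\impSet,\envSet)$ and the refinement order, and is also implied by \eqref{eq:recipOptIdentity}) together with the same mirror-invariance of the Abadi--Lamport triple. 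Both arguments hinge on the same two facts: that $\cont\contComposition\cont'\le\cont''$ holds iff $\cont''$ itself satisfies the triple (which needs upward closure of the family and membership of its least element --- you rightly flag this as the delicate point; the paper uses it tacitly in its first equality), and that the triple is invariant, up to commutativity of $\cmpComposition$, under the relabeling the mirror induces. Your route buys independence from \eqref{eq:recipOptIdentity} and from the a priori existence of the supremum of the solution set; the paper's route is a shorter mechanical computation once \eqref{eq:recipOptIdentity} is granted. Either is acceptable.
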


\paragraph{Running example.}
We use the quotient to find the specification of the component that we need to add to the system shown in Figure \ref{fg:kabhgoi} in order to meet the top level contract $\cont$. 
To compute the quotient, we use \eqref{eq:dcContQuotient}. 
We let $\cont \contQuotient \cont_c = (\envSet_q,\,\, \impSet_q)$ and obtain $\envSet_q = \envSet  \land \impSet_c$ and
\begin{align*}
    \impSet_q =& \{M \in \cmpUniverse \,\,|\,\, \exists f \in (2^n \to 2^o) \forall (H, S, P, O_1, O_2, O) \\ & \in M. \left((s_1 \rightarrow O_1 {=} f^*(P)) \land (s_2 \rightarrow O_2 {=} f^*(P))\right) \rightarrow
    O {=} f(P) \}.
\end{align*}
We can refine the quotient by lifting any restrictions on the environments, and
picking from the implementations the term with $f = f^*$. Observe that $f^*$ is a valid choice for $f$. This yields the
hypercontract $\cont_3 = (\envSet_3, \impSet_3)$, defined as $\envSet_3 = \cmpUniverse$ and
\begin{align*}
    \impSet_3 = &\{M \in \cmpUniverse \,\,|\,\, \forall (H, S, P, O_1, O_2, O) \in M. \\ & \left((s_1 \rightarrow O_1 {=} f^*(P)) \land (s_2 \rightarrow O_2 {=} f^*(P))\right) \rightarrow
    O {=} f^*(P) \}.
\end{align*}
A further refinement of this hypercontract is $\cont_r = (\envSet_r, \impSet_r)$, where $\envSet_r = \cmpUniverse$ and
\begin{align*}
    \impSet_r = &\{M \in \cmpUniverse \,\,|\,\, \forall (H, S, P, O_1, O_2, O) \in M. \left((s_1 \rightarrow O {=} O_1) \land (s_2 \rightarrow O {=} O_2)\right) \}.
\end{align*}
By the properties of the quotient, composing
this hypercontract, which knows nothing about $f^*$, with $\cont_c$ will yield a
hypercontract which meets the non-interference hypercontract $\cont$. Note that
this hypercontract is consistent, i.e., it has implementations (in general,
refining may lead to inconsistency).

\subsubsection{Merging}

The composition of two hypercontracts yields the specification of a system
comprised of two design objects, each adhering to one of the hypercontracts
being composed. Another important operation on hypercontracts is viewpoint
merging, or \emph{merging} for short. It can be the case that the same design
element is assigned multiple specifications corresponding to multiple
viewpoints, or design concerns \cite{Benveniste2008,agContractMerging} (e.g.,
functionality and a performance criterion). Suppose $\cont_1 = (\envSet_1,
\sysSet_1)$ and $\cont_2 = (\envSet_2, \sysSet_2)$ are the hypercontracts we
wish to merge. Two slightly different operations can be considered as candidates for formalizing viewpoint
merging:
\begin{itemize}[wide=0pt]
    \item A \emph{weak merge} which is the GLB; and
     \item A \emph{strong merge} which states that environments of the merger
     should be environments of both $\cont_1$ and $\cont_2$ and that the closed
     systems of the merger are closed systems of both $\cont_1$ and $\cont_2$.
     If we let $\cont_1\bullet\cont_2 = (\envSet, \impSet)$, we have
\end{itemize}
\begin{align*}
    \envSet &= \begin{aligned} \join \{ \envSet' \in \catCmpSet \,\, | \,\, \envSet' \le \envSet_1
    \meet \envSet_2 \text{ and }  
    \exists\, \cont'' = (\envSet'', \impSet'') \in
    \catContract.\,\, \envSet' = \envSet'' \} \end{aligned} \\
    \impSet &= \join \left\{\impSet' \in
    \catCmpSet \,\,\middle |\,\, \begin{aligned} & \impSet' \le (\sysSet_1 \meet \sysSet_2)/\envSet
    \text{ and } \\ & (\envSet, \impSet) \in \catContract \end{aligned} \right\}.
\end{align*}
The difference is that, whereas the commitment to satisfy $\sysSet_2$ survives
when under the weak merge when the environment fails to satisfy $\envSet_1$, no
obligation survives under the strong merge. This distinction was proposed
in~\cite{DBLP:journals/ejcon/Sangiovanni-VincentelliDP12} under the name of
weak/strong assumptions.

\section{Representation of compsets and hypercontracts}
\newcommand{\den}[0]{\text{Den}}

We have laid out the theory of hypercontracts, built in three stages. We now
discuss the issue of syntactically representing these objects. Up to now, we
have written compsets explicitly as sets. Doing this, however, results in a
problem of \emph{portability}. Consider again the example shown in Figure
\ref{nlakudbai}. In our running example, we found out that we could express the
property of non-interference for the top-level component through the expression
$\{(H, S, P, O_1, O_2, O)\,\,|\,\, \exists f \in (2^1 \times 2^{n} \to 2^o).\, O
= f(H, P)\}$. What would happen if we added more internal variables to the
system? Suppose, for example, that we have an additional variable $O_3$. In that
case, the theory of components needs to define component behaviors also over the
variable $O_3$, and the compset in question becomes $\{(H, S, P, O_1, O_2, O_3,
O)\,\,|\,\, \exists f \in (2^1 \times 2^{n} \to 2^o).\, O = f(H, P)\}$. This
makes it clear that compsets change when the theory
of components modifies its variables. Yet, we would agree that the two compsets we wrote represent the
same components.

In order to have a representation of compsets which is invariant to adding new
variable names to the theory of components, we assume we have a logic $\Psi$
whose formulas are denoted by compsets. We require $\Psi$ to be a lattice and
the denotation map $$\den: \Psi \to \catCmpSet$$ to be a lattice map. This means
that $\den(\psi \land \psi') = \den(\psi) \land \den (\psi')$ and $\den(\psi
\lor \psi') = \den(\psi) \lor \den (\psi')$. The $\den$ map also provides us
with the means to represent hypercontracts, as these are given by a pair of
compsets.

\paragraph{Example}. As an example, suppose we have a theory with only one
component: a voltage amplifier with an output $O$ having the same real value as
its input $I$. The component is given by $M = \{(I, O) \in \mathbb{R}^2 \,\,|\,\,
O = I\}$. The theory of compsets has two elements: $\emptyset$ and $\{M\}$.
Suppose we have a logic $\Psi$ with symbols $i, o$ in which the formula $\psi
\coloneqq i = o$ is well defined and has a denotation $\den(\psi) = \{C \in
\cmpUniverse \,\,|\,\, \forall (I,O) \in C.\, I = O\} = \{M\}$.

Now suppose we alter the component theory so that it has an additional real
variable $T$. Now the component $M$ becomes $M' = \{(I, O, T) \in \mathbb{R}^3
\,\,|\,\, O = I\}$. Observe that the description of the component $M$ has
changed; yet, we could say that $M'$ is completely independent of $T$. Now
suppose we have a logic $\Psi'$ with symbols $i, o, t$ in which the formula
$\psi \coloneqq i = o$ is also well-defined. We can build a denotation map
$\den' : \Psi' \to \catCmpSet$ such that $\den'(\psi) = \{C \in \cmpUniverse
\,\,|\,\, \forall (I,O, T) \in C.\, I = O\} = \{M'\}$.

We observe in this example that we were able to use the same formula $\psi$ in
order to represent a compset, even when we modified the underlying symbols on
which objects were defined. In other words, representations allow us to
define compsets by only using ``local knowledge'' about the interfaces of the
components described by the compset, despite the fact that components are
denoted on the set of behaviors of the entire system.

\section{Behavioral modeling}

In the behavioral approach to system modeling, we
start with a set $\behaviorUnv$ whose elements we call behaviors. Components are
defined as subsets of $\behaviorUnv$. They contain the behaviors they can
display. A component $\cmp$ is a subcomponent of $\cmp'$ if $\cmp'$ contains all
the behaviors of $\cmp$, i.e., if $\cmp \subseteq \cmp'$. Component composition
is given by set intersection: $\cmp \times \cmp' \defined \cmp
\setint \cmp'$. If we represent the components as $\cmp = \setArg{b \in
\behaviorUnv}{\phi(b)}$ and $\cmp' = \setArg{b \in \behaviorUnv}{\phi'(b)}$ for
some constraints $\phi$ and $\phi'$, then composition is $\cmp \times \cmp' =
\setArg{b \in \behaviorUnv}{\phi(b) \land \phi'(b)}$, i.e., the behaviors that
simultaneously meet the constraints of $\cmp$ and $\cmp'$. This notion of
composition is independent of the connection topology: the topology is inferred
from the behaviors of the components. The quotient is given by
implication: $\cmp \cmpQuotient \cmp' = \cmp' \to \cmp$.

We will consider three contract theories we can build with these components. The
first is based on unconstrained hyperproperties; the second is based on
downward-closed hyperproperties; and the third corresponds to assume-guarantee
contracts.

\subsection{General hypercontracts}

The most expressive behavioral theory of hypercontracts is obtained when we
place no restrictions on the structure of \compsets and hypercontracts. In this
case, the elements of $\catCmpSet$ are all objects $H \in 2^{2^{\behaviorUnv}}$,
i.e., all hyperproperties. The meet and join of \compsets are set intersection
and union, respectively, and their composition and quotient are given by
\eqref{eq:hypComp} and \eqref{eq:hypQuot}, respectively.
Hypercontracts are of the form $\cont = (\envSet, \impSet)$ with all extrema
achieved in the binary operations, i.e., for a second hypercontract $\cont' =
(\envSet', \impSet')$, the meet, join, and
composition \eqref{eq:cntComp} are, respectively, 
$\cont \meet \cont' = (\envSet \setunion
\envSet', \impSet \setint \impSet')$, 
$\cont \join \cont' = (\envSet \setint
\envSet', \impSet \setunion \impSet')$, and $\cont \contComposition \cont' =
\left(\frac{\envSet'}{\impSet} \setint \frac{\envSet}{\impSet'}, \impSet
\cmpComposition \impSet'\right)$.
From these follow the operations of quotient,
and merging.

\subsection{\Conic (or downward-closed) hypercontracts}
\label{sc:ixoiagbg}

We assume that $\catCmpSet$ contains exclusively downward-closed hyperproperties.
Let $\cmpSet \in \catCmpSet$. We say that $\cmp \models \cmpSet$ is a maximal
component of $\cmpSet$ when $\cmpSet$ contains no set bigger than $\cmp$, i.e.,
if
\[\forall \cmp' \models \cmpSet.\,\, \cmp \le \cmp' \Rightarrow \cmp' = \cmp.\]
We let $\maxcomp \cmpSet$ be the set of maximal components of $\cmpSet$:
\[
    \maxcomp \cmpSet = \setArg{\cmp \models \cmpSet}{ \forall \cmp' \models \cmpSet.\,\, \cmp \le \cmp' \Rightarrow \cmp' = \cmp}.
\]
Due to the fact $\cmpSet$ is downward-closed, the set of maximal components is a
unique representation of $\cmpSet$. We can express $\cmpSet$ as
\[\begin{array}{c}
    \cmpSet = \bigcup_{\cmp \in \maxcomp \cmpSet} \; 2^{\cmp}.
\end{array}
\]
We say that $\cmpSet$ is
\emph{$k$-\maximal}
if the cardinality of $\maxcomp \cmpSet$
is finite and equal to $k$, and we write this 
\[\mbox{$H=\langle \cmp_1, \ldots, \cmp_k\rangle$,~ where $\maxcomp \cmpSet=\{\cmp_1, \ldots, \cmp_k\}$.}\]
\subsubsection{Order}
The notion of order on $\catCmpSet$ can be expressed using
this notation as follows: suppose $\cmpSet' = \langle \cmp' \rangle_{\cmp' \in
\maxcomp \cmpSet '}$. Then 
\[
    \cmpSet' \le \cmpSet \text{ if and only if } \forall \cmp' \in \maxcomp \cmpSet ' \,\, \exists \cmp \in \maxcomp \cmpSet.\,\, \cmp' \le \cmp.
\]
\subsubsection{Composition}
Composition in $\catCmpSet$ becomes
\begin{align}\label{eq:dcCmpSetProduct}\begin{array}{c}
    \cmpSet \times \cmpSet' = \bigcup_{\substack{\cmp \in \maxcomp \cmpSet \\ \cmp' \in \maxcomp \cmpSet'}} \; 2^{\cmp \setint \cmp'} = \langle \cmp \setint \cmp' \rangle_{\substack{\cmp \in \maxcomp \cmpSet \\ \cmp' \in \maxcomp \cmpSet'}}.
    \end{array}
\end{align}
Therefore, if $\cmpSet$ and $\cmpSet'$ are, respectively, $k$- and $k'$-\maximal,
$\cmpSet \times \cmpSet'$ is at most $k k'$-\maximal.

\subsubsection{Quotient}
Suppose $\cmpSet_q$ satisfies
\[
    \cmpSet' \times \cmpSet_q \le \cmpSet.
\]
Let $\cmp_q \in \maxcomp \cmpSet_q$. We must have
\[
    \cmp_q \times \cmp' \models \cmpSet \text{ for every } \cmp' \in \maxcomp \cmpSet',
\]
which means that for each $\cmp' \in \maxcomp \cmpSet'$ there must exist an $\cmp
\in \maxcomp \cmpSet$ such that $\cmp_q \times \cmp' \le \cmp$; let us denote by $M(M')$ a choice $M'\mapsto{M}$ satisfying this condition. Therefore, we have
\begin{align}\label{ghqoudya}\begin{array}{c}
    \cmp_q \le \bigwedge_{\cmp' \in \maxcomp \cmpSet'} \;\frac{\cmp({\cmp'})}{\cmp'},
    \end{array}
\end{align}
Clearly, the largest such $\cmp_q$ is
obtained by making \eqref{ghqoudya} an equality. Thus, the cardinality of the
quotient is bounded from above by $k^{k'}$ since we have
\begin{align}\label{eq:dcCmpSetQuotient}\begin{array}{c}
    \cmpSet_q = \left\langle \bigwedge_{\cmp' \in \maxcomp \cmpSet'} \; \frac{\cmp({\cmp'})}{\cmp'} \right\rangle_{\substack{\cmp({\cmp')} \in \maxcomp \cmpSet \\ \forall \cmp' \in \maxcomp \cmpSet'}}.
    \end{array}
\end{align}

\subsubsection{Contracts}

Now we assume that the objects of $\catCmpSet$ are pairs of \emph{downward-closed \compsets.}
If we have two hypercontracts $\cont = (\envSet, \impSet)$ and $\cont' =
(\envSet', \impSet')$, their composition is
\begin{align}\label{eq:dcContProduct}
   \cont \contComposition \cont' = \left( \frac{\envSet}{\impSet'} \meet \frac{\envSet'}{\impSet}, \impSet \times \impSet' \right).
\end{align}
%
We can also write an expression for the quotient of two hypercontracts:
\begin{align}\label{eq:dcContQuotient}
   \cont / \cont' = \left( \envSet \times \impSet', \frac{\impSet}{\impSet'} \meet \frac{\envSet'}{\envSet}\right).
\end{align}



\subsection{Interval AG contracts}
\label{xknskaJK}

Now we explore AG with a must modality.
We will assume that elements of $\catCmpSet$ are property intervals. In other
words, if $\cmpSet$ is a \compset, we can find components $L, R \in
\cmpUniverse$ such that $\cmpSet = \setArg{\cmp \in \cmpUniverse}{L \le \cmp \le
R}$. We will refer to such \compsets as \emph{modal} or \emph{interval}
\compsets, and we write them as $\cmpSet = [L, R]$. The name modal is used to
indicate that a component satisfying a modal \compset must implement some
behaviors (those contained in $L$) and is only allowed to implement certain
behaviors (those contained in $R$).

Let $\cmpSet = [L ,R]$ and $\cmpSet'= [L',R']$. The operations on \compsets are given by
\begin{align*}
    \cmpSet \cmpComposition \cmpSet' =& \setArg{M \cmpComposition M'}{L \le M \le R \text{ and } L' \le M' \le R'}  =
    [L \setint L', R\setint R'], \\
    \cmpSet \meet \cmpSet' =& \setArg{M }{L \le M \le R \text{ and } L' \le M \le R'}  =
    [L \setunion L', R\setint R'],\\
    \cmpSet \join \cmpSet' =& \setArg{M }{L \le M \le R \text{ or } L' \le M \le R'}  =
    [L \setint L', R\setunion R'], \text{ and}\\
    \cmpSet \cmpQuotient \cmpSet' =& \join \setArg{ [L'', R''] }{ \cmpSet' \cmpComposition [L'', R''] \le \cmpSet } \\ =&
    \join \setArg{ [L'', R''] }{ [L' \setint L'', R' \setint R''] \le \cmpSet } \\ 
    =& [L, R \setunion \neg R'] \quad \text{(only defined when $L \le L'$)}.
\end{align*}

We now state the expressions for composition and quotient.

\begin{proposition}\label{uiqtdvo}
    Suppose $\cont = (\envSet, \impSet)$ and $\cont' = (\envSet', \impSet')$ with $\envSet = [L_e, R_e]$, $\impSet = [L_i, R_i]$, $\envSet' = [L_e', R_e']$, and $\impSet' = [L_i', R_i']$. The composition of these hypercontracts is only defined when $L_e = L_i = L_e' = L_i'$. Set $L = L_e$. Then the composition $\cont \contComposition \cont' = (\envSet_c, \impSet_c)$ is of the form
    \begin{align*}
        \envSet_c &=
    [L, (R_e \setint R_e') \setunion (R_e' \setint \neg R_i') \setunion (R_e \setint \neg R_i) ]\text{ and}\\
        \impSet_c &=
    [L, (R_i \setunion \neg R_e) \setint (R_i' \setunion \neg R_e') ].
    \end{align*}
    Now suppose $\cont = (\envSet, \impSet)$ and $\cont'' = (\envSet'', \impSet'')$ with $\envSet'' = [L_e'', R_e'']$, $\impSet'' = [L_i'', R_i'']$, $\envSet = [L_e, R_e]$, and $\impSet = [L_i, R_i]$. The residual $\cont'' \contQuotient \cont = (\envSet_r, \impSet_r)$ is only defined when $L_i'' \le L_i = L_e \le L_e''$. Call $L = L_i$. The components of the quotient have the form
    \begin{align*}
        \envSet_r =& [L, R_e'' \setint \left( R_i \setunion \neg R_e \right)] \text{ and} \\
        \impSet_r=&
    [L , \left(R_e \setint R_i''\right) \setunion \neg R_e'' \setunion \left( R_e \setint \neg R_i \right)].
    \end{align*}
\end{proposition}


\section{Receptive languages and interface hypercontracts}

In this section we connect the notion of a hypercontract with specifications
expressed as interface automata \cite{alfaroHenzingerIntAutomata}. 
With interface theories, we bring in the notion of
input-output profiles as an extra typing for components---so far, this was not considered in our development. This effectively partitions $\cmpUniverse$ into sets containing components sharing the same profile.

Our theory of components is constructed from a new notion called \emph{receptive
languages}. These objects can be understood as the trace denotations of
receptive I/O automata \cite{Lynch89anintroduction}. We will consider
downward-closed, 1-\maximal \compsets, see Section~\ref{sc:ixoiagbg}. And interface hypercontracts will be pairs of these
with a very specific structure. At the end of the section we show how the
denotation of interface automata is captured by interface hypercontracts. One
novelty of our approach is that the computation of the composition of
hypercontracts, which matches that of interface automata (as we will see), is
inherited from our general theory by specializing the component and \compset operations.

\subsection{The components are receptive languages}

Fix once and for all an alphabet $\alphabet$. When we operate on words of
$\alphabet^*$, we will use $\concat$ for word concatenation, and we'll let
$\pre(w)$ be the set of prefixes of a word $w$. These operations are extended to
languages: $$L \concat L' = \setArg{w \concat w'}{w \in L \text{ and } w' \in
L'},$$ and $\pre(L) = \bigcup_{w \in L} \pre(w)$. 
An \emph{input-output} signature of $\alphabet$ (or simply an IO signature when
the alphabet is understood), denoted $(I,O)$, is a partition of $\alphabet$ in
sets $I$ and $O$, i.e., $I$ and $O$ are disjoint sets whose union is
$\alphabet$.

\begin{definition}
Let $(I,O)$ be an IO signature. A language $L$ of $\alphabet$ is an
$I$-receptive language if 
\begin{itemize}[wide=0pt]
    \item $L$ is prefix-closed; and
    \item if $w \in L$ and $w' \in I^*$ then $w \concat w' \in L$.
\end{itemize}
The set of all $I$-receptive languages is denoted $\langSet_{I}$.
\end{definition}

\begin{proposition}\label{kbx8we7d} Let $(I,O)$ be an IO signature. Then
    $\langSet_{I}$ is closed under intersection and union.
\end{proposition}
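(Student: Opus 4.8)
The plan is to check the two defining conditions of an $I$-receptive language---prefix-closure, and closure under right-concatenation by words of $I^*$---directly for $L \setint L'$ and for $L \setunion L'$, given $L, L' \in \langSet_I$. Both conditions are ``closure conditions'' quantified over all words, so they transfer to intersections essentially for free, and to unions by a short case analysis on which of the two languages supplies the witness word.

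For the intersection: to see prefix-closure, take $w \in L \setint L'$ and $v \in \pre(w)$; prefix-closure of $L$ gives $v \in L$ and prefix-closure of $L'$ gives $v \in L'$, so $v \in L \setint L'$. To see receptivity, take $w \in L \setint L'$ and $w' \in I^*$; receptivity of $L$ gives $w \concat w' \in L$ and receptivity of $L'$ gives $w \concat w' \in L'$, so $w \concat w' \in L \setint L'$. Hence $L \setint L' \in \langSet_I$. For the union: to see prefix-closure, take $w \in L \setunion L'$ and $v \in \pre(w)$; if $w \in L$ then $v \in L \subseteq L \setunion L'$ by prefix-closure of $L$, and if instead $w \in L'$ then $v \in L' \subseteq L \setunion L'$ by prefix-closure of $L'$. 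To see receptivity, take $w \in L \setunion L'$ and $w' \in I^*$; if $w \in L$ then $w \concat w' \in L \subseteq L \setunion L'$ by receptivity of $L$, and if $w \in L'$ then $w \concat w' \in L' \subseteq L \setunion L'$ by receptivity of $L'$. Hence $L \setunion L' \in \langSet_I$.

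There is no real obstacle here; the only point worth noting is why the union case works at all, since a universally quantified condition does not in general pass to unions. It works because each condition, applied to a single witness word $w$ drawn from $L$ (resp.\ $L'$), delivers a conclusion ($v$, resp.\ $w \concat w'$) that again lies in $L$ (resp.\ $L'$)---so a disjunction in the hypothesis yields a disjunction in the conclusion, which is exactly membership in $L \setunion L'$. Nothing about the partition $(I,O)$ beyond $I \subseteq \alphabet$ is used, and the argument does not interact with the output part $O$.
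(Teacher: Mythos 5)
Your proof is correct and follows essentially the same route as the paper's: verify prefix-closure and $I^*$-receptivity directly for the intersection (using both hypotheses simultaneously) and for the union (by a case split on which of $L$, $L'$ contains the witness word). Your closing remark about why the union case goes through is a nice observation, but the substance of the argument is the same.
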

Under the subset order, $\langSet_I$ is a lattice with intersection as the meet
and union as the join. Further, the smallest and largest elements of
$\langSet_I$ are, respectively, $0 = I^*$ and $1 = \alphabet^*$. It so happens
that $\langSet_I$ is a Heyting algebra. To prove this, it remains to be shown
that it has exponentiation (i.e., that the meet has a right adjoint).

\begin{proposition}\label{kjgh9yrbp} Let $L, L' \in \langSet_I$. The object
    $$
    L' \rightarrow L = \setArg{w \in \alphabet^*}{\pre(w) \setint L' \subseteq L}
    $$
    is an element of $\langSet_I$ and satisfies the property
    \eqref{o8tibobno} of the exponential.
\end{proposition}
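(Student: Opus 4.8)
The statement splits into two independent claims: that $L' \to L$ belongs to $\langSet_I$, i.e.\ is prefix-closed and $I$-receptive; and that it is the right adjoint of meeting with $L'$, i.e.\ for every $L'' \in \langSet_I$ one has $L' \meet L'' \le L$ if and only if $L'' \le L' \to L$. I would prove membership first, by unwinding the two closure conditions against the definition $L' \to L = \setArg{w}{\pre(w) \setint L' \subseteq L}$, and then the adjunction by a short two-direction argument that uses only that $L''$ is prefix-closed. Together with Proposition~\ref{kbx8we7d} and the remarks on top/bottom elements, this completes the claim that $\langSet_I$ is a Heyting algebra.

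For membership in $\langSet_I$: prefix-closure of $L' \to L$ is immediate from monotonicity of $\pre$ under the prefix order---if $v \in \pre(w)$ then $\pre(v) \subseteq \pre(w)$, hence $\pre(v) \setint L' \subseteq \pre(w) \setint L' \subseteq L$, so $v \in L' \to L$. For $I$-receptivity I would fix $w \in L' \to L$ and $w' \in I^*$ and use the decomposition $\pre(w \concat w') = \pre(w) \cup \setArg{w \concat u}{u \in \pre(w')}$. Intersecting with $L'$, the prefixes of $w$ contribute $\pre(w) \setint L' \subseteq L$ by assumption. For a prefix of the form $w \concat u$ with $u \in \pre(w') \subseteq I^*$ that lies in $L'$: prefix-closure of $L'$ forces $w \in L'$, so $w \in \pre(w) \setint L' \subseteq L$; then $I$-receptivity of $L$ together with $u \in I^*$ yields $w \concat u \in L$. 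Hence $\pre(w \concat w') \setint L' \subseteq L$, so $w \concat w' \in L' \to L$.

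For the adjunction: if $L'' \le L' \to L$, then for any $w \in L' \setint L''$ we have $w \in L' \to L$, and since $w \in \pre(w) \setint L'$ this forces $w \in L$; thus $L' \setint L'' \subseteq L$. Conversely, if $L' \setint L'' \subseteq L$, fix $w \in L''$ and $v \in \pre(w) \setint L'$; prefix-closure of $L''$ gives $v \in L''$, so $v \in L' \setint L'' \subseteq L$, and therefore $\pre(w) \setint L' \subseteq L$, i.e.\ $w \in L' \to L$. This shows $L'' \le L' \to L$.

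I expect the only step needing care to be the second half of the $I$-receptivity argument, where one must notice that promoting $w \concat u \in L'$ to $w \concat u \in L$ uses precisely prefix-closure of $L'$ (to get $w \in L'$, hence $w \in L$) and then $I$-receptivity of $L$ (to append $u \in I^*$). Everything else is routine manipulation of prefix sets and set intersection.
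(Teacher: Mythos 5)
Your proof is correct, and its overall shape (membership in $\langSet_I$ first, then the two directions of the adjunction) matches the paper's. The one place where you genuinely diverge is the $I$-receptivity step. The paper argues by cases on whether $w \in L$: when $w \in L$ it observes that all of $\pre(w \concat w_I)$ already lies in $L$; when $w \notin L$ it locates the longest prefix of $w$ lying in $L'$, shows it is proper, and concludes via prefix-closure of $L'$ that \emph{every} extension of $w$ (not just by $I^*$) has the same intersection with $L'$ as $w$ does, hence stays in $L' \to L$. You instead avoid the case split entirely by decomposing $\pre(w \concat w')$ into $\pre(w)$ and the new prefixes $w \concat u$ with $u \in I^*$, and disposing of the latter by: $w \concat u \in L'$ forces $w \in L'$ (prefix-closure of $L'$), hence $w \in L$ (by $w \in L' \to L$), hence $w \concat u \in L$ ($I$-receptivity of $L$). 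Your route is shorter and more uniform; it also sidesteps the paper's incidental reliance on $\estring \in L$ in the $w \notin L$ branch. What it does not give you is the stronger fact the paper's second case establishes in passing, namely that words of $L' \to L$ outside $L$ are closed under arbitrary $\alphabet^*$-extensions --- a fact the paper reuses when deriving the closed form $L' \to L = L \setunion \missingExt(L, L', O)$ in the next proposition. Your adjunction argument is the same as the paper's, and is in fact slightly more careful in making explicit that prefix-closure of $L''$ is what licenses the forward implication.
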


We further explore the structure of the exponential. To do this, it will be
useful to define the following set: for languages $L, L'$ and a set $\Gamma
\subseteq \alphabet$, we define the set of \emph{missing $\Gamma$-extensions of
$L'$ with respect to $L$} as
$$
    \missingExt(L, L', \Gamma) = \left( \left((L \setint L') \concat \Gamma\right) \setminus L'\right) \concat \alphabet^*\,.
$$
The elements of this set are all words of the form $w \concat \sigma \concat
w'$, where $w \in L \setint L'$, $\sigma \in \Gamma$, and $w' \in \alphabet^*$.
These words satisfy the condition $w \concat \sigma \not \in L'$. In other
words, we find the words of $L \setint L'$ which, when extended by a symbol of
$\Gamma$, leave the language $L'$, and extend these words by the symbols that
make them leave $L'$ and then by every possible word of $\alphabet^*$.

\begin{proposition}\label{98wninhiu} Let $L, L' \in \langSet_I$. The exponential
    is given by $L' \rightarrow L = L \setunion \missingExt(L, L', O)$.
\end{proposition}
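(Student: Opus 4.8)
The plan is to prove the set equality $L' \rightarrow L = L \setunion \missingExt(L, L', O)$ by establishing both inclusions, using the characterization from Proposition~\ref{kjgh9yrbp}, namely $L' \rightarrow L = \setArg{w \in \alphabet^*}{\pre(w) \setint L' \subseteq L}$. The key structural fact I would exploit is that membership of $w$ in the exponential is a statement about \emph{all} prefixes of $w$ that lie in $L'$; since $L'$ is prefix-closed, once a prefix of $w$ leaves $L'$, no longer prefix re-enters $L'$, so the ``dangerous'' prefixes form an initial segment.

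First I would prove $L \setunion \missingExt(L, L', O) \subseteq L' \rightarrow L$. For $w \in L$: since $L$ is prefix-closed, $\pre(w) \subseteq L$, hence $\pre(w) \setint L' \subseteq L$, so $w \in L' \rightarrow L$. For $w \in \missingExt(L, L', O)$, write $w = u \concat \sigma \concat v$ with $u \in L \setint L'$, $\sigma \in O$, $v \in \alphabet^*$, and $u \concat \sigma \notin L'$. I must show every prefix of $w$ that lies in $L'$ also lies in $L$. A prefix $p$ of $w$ is either a prefix of $u$, or has $u \concat \sigma$ as a prefix. In the first case $p \in \pre(u) \subseteq L$ (prefix-closedness of $L$), so it is harmless. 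In the second case $u \concat \sigma \in \pre(p)$; since $u \concat \sigma \notin L'$ and $L'$ is prefix-closed, $p \notin L'$, so again $p$ does not threaten the inclusion. Hence $\pre(w) \setint L' \subseteq L$ and $w \in L' \rightarrow L$.

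Next I would prove the reverse inclusion $L' \rightarrow L \subseteq L \setunion \missingExt(L, L', O)$. Take $w$ with $\pre(w) \setint L' \subseteq L$; I must show $w \in L$ or $w \in \missingExt(L, L', O)$. If $w \in L'$ then $w \in \pre(w) \setint L' \subseteq L$, done. So assume $w \notin L'$. Let $u$ be the longest prefix of $w$ that lies in $L'$ --- this is well defined because $\estring \in L'$ (as $L'$ is prefix-closed and nonempty, being $I$-receptive; if one worries about $L' = \emptyset$ that case is degenerate and handled separately). Since $w \notin L'$, $u$ is a proper prefix, so $w = u \concat \sigma \concat v$ for some symbol $\sigma \in \alphabet$ and $v \in \alphabet^*$; by maximality of $u$, $u \concat \sigma \notin L'$. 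Now $u \in \pre(w) \setint L' \subseteq L$, so $u \in L \setint L'$. It remains to argue $\sigma \in O$: if instead $\sigma \in I$, then by $I$-receptivity of $L'$, from $u \in L'$ and $\sigma \in I^*$ we get $u \concat \sigma \in L'$, contradicting $u \concat \sigma \notin L'$. Hence $\sigma \in O$, and $w = u \concat \sigma \concat v \in \missingExt(L, L', O)$.

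The main obstacle, and the only genuinely delicate point, is the second inclusion: specifically the step that forces $\sigma \in O$ for the ``first exit symbol.'' This is exactly where $I$-receptivity of $L'$ is used in an essential way, and it is worth spelling out carefully --- the exponential cannot be escaped by an input symbol, which is the intuition behind why only $O$-extensions appear in $\missingExt$. The remaining details (prefix-closedness bookkeeping, existence of a longest prefix in $L'$, and the degenerate empty-language cases) are routine. I would also, at the end, note as a sanity check that the right-hand side is visibly in $\langSet_I$: $L \in \langSet_I$ by hypothesis, $\missingExt(L,L',O)$ ends in $\concat\,\alphabet^*$ so it is prefix-closed after taking $\pre$ of nothing --- more precisely one checks prefix-closedness and $I$-receptivity directly, but this is already guaranteed by Proposition~\ref{kjgh9yrbp} together with Proposition~\ref{kbx8we7d}, so the set-equality is all that is needed.
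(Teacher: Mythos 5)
Your proof is correct and takes essentially the same route as the paper's: both rest on the characterization of $L' \rightarrow L$ from Proposition~\ref{kjgh9yrbp} and on the key observation that a word outside $L$ can only belong to the exponential if its first symbol of departure from $L'$ (after a longest prefix lying in $L \setint L'$) is an output symbol, $I$-receptivity of $L'$ ruling out departure by an input. Your double-inclusion write-up is in fact more complete than the paper's rather terse single-symbol case analysis, and your explicit handling of the ``first exit symbol'' step is exactly the point the paper glosses over.
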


At this point, it has been established that each $\langSet_I$ is a Heyting
algebra. Now we move to composition and quotient, which involve languages of different IO signatures.

\subsection{Composition and quotient of receptive languages}

To every $I \subseteq \alphabet$, we have associated the set of languages
$\langSet_I$. Suppose $I' \subseteq I$. Then $L \in
\langSet_I$ if it is prefix-closed, and the extension of any word of $L$ by any
word of $I^*$ remains in $L$. But since $I' \subseteq I$, this means that the
extension of any word of $L$ by any word of $(I')^*$ remains in $L$, so $L \in
\langSet_{I'}$. We have shown that $I \subseteq I' \Rightarrow \langSet_{I'} \le
\langSet_{I}$. Thus, the map $I \mapsto \langSet_I$ is a contravariant functor
$2^\alphabet \to 2^{2^{\alphabet^*}}$.

Since $I' \subseteq I$ implies that $\langSet_{I} \le \langSet_{I'}$, we define
the embedding $\iota: \langSet_{I} \to \langSet_{I'}$ which maps a language of
$\langSet_{I}$ to the same language, but interpreted as an element of
$\langSet_{I'}$, 

Let $(I,O)$ and $(I', O')$ be IO signatures of $\alphabet$, $L \in \langSet_I$,
and $L' \in \langSet_{I'}$. The composition of structures with labeled inputs
and outputs traditionally requires that objects to be composed can't share
outputs. We say that IO signatures $(I,O)$ and $(I', O')$ are \emph{compatible}
when $O \setint O' = \emptyset$. This is equivalent to requiring that $I
\setunion I' = \alphabet$. Moreover, the object generated by the composition
should have as outputs the union of the outputs of the objects being composed.
This reasoning leads us to the definition of composition:

\begin{definition}[composition]
    Let $(I,O)$ and $(I', O')$ be compatible IO signatures of $\alphabet$. Let
    $L \in \langSet_I$ and $L' \in \langSet_{I'}$. The operation of language
    \emph{composition}, $\times: \langSet_{I}, \langSet_{I'} \to \langSet_{I \setint
    I'}$, is given by
    \[
        L \times L' = \iota L \land \iota' L',
    \]
    for the embeddings $\iota: \langSet_{I} \to \langSet_{I \setint I'}$ and
    $\iota': \langSet_{I'} \to \langSet_{I \setint I'}$.
\end{definition}
The adjoint of this operation is the quotient. We will investigate when the
quotient is defined. Let $I,I' \subseteq \alphabet$ with $I \subseteq I'$, $L
\in \langSet_I$, and $L' \in \langSet_{I'}$. Suppose there is $I_r \subseteq
\alphabet$ such that the composition rule $\times: \langSet_{I'}, \langSet_{I_r}
\to \langSet_{I}$ is defined. This means that $I' \setunion I_r = \alphabet$ and
$I' \setint I_r = I$. Solving yields $I_r = I \setunion \neg I' = I \setunion
O'$.

Observe that the smallest element of $\langSet_{I_r}$ is $I_r^*$. Thus, the
existence of a language $L'' \in \langSet_{I_r}$ such that $L'' \times L' \le L$
requires that $L' \setint I_r^* \subseteq L$. Clearly, not every pair $L, L'$
satisfies this property since we can take, for example, $L = I^*$ and $L' =
\alphabet^*$ to obtain $L' \setint I_r^* = (I \setunion O')^* \not \subseteq
I^*$, provided $I' \ne \alphabet$.

We proceed to obtain a closed-form expression for the quotient, but first we
define a new operator. For languages $L, L'$ and sets $\Gamma, \Delta \subseteq
\alphabet$, the following set of \emph{$(L', \Gamma, \Delta)$-uncontrollable extensions of $L\cap{L'}$}
\begin{align}
    \nonumber
    &\uncont(L, L', \Gamma, \Delta) = \\ &\setArg{\begin{aligned}w \in \\ L {\setint} L' \end{aligned}}{
        \begin{aligned}
        &\exists w' \in (\Gamma {\setunion} \Delta)^* \;\land\; \sigma {\in} \Gamma. \\
        &w {\concat} w' \in L {\setint} L' \;\land\; \\ &w {\concat} w' {\concat} \sigma \in L' {\setminus} L
        \end{aligned}}
        \concat \alphabet^*.
        \label{Uncont}
\end{align}
contains: $(i)$ all words of $L
\setint L'$ which can be uncontrollably extended to a word of $L' \setminus L$
by appending a word of $(\Gamma \setunion \Delta)^*$ and a symbol of $\Gamma$, and $(ii)$ all suffixes of such words. Equivalently, $\uncont(L, L', \Gamma, \Delta)$ contains all extensions of
the words $w \in L \setint L'$ such that there are extensions of $w$ by words
$w' \in (\Gamma \setunion \Delta)^*$ that land in $L'$ but not in $L$ after
appending to the extensions $w \circ w'$ a symbol of $\Gamma$. 

\begin{proposition}\label{nha9dbq9o} Let $(I,O)$ and $(I', O')$ be IO signatures
    of $\alphabet$ such that $I \subseteq I'$. Let $L \in \langSet_I$ and $L'
    \in \langSet_{I'}$. Let $I_r = I \setunion O'$, and assume that $L' \setint
    I_r^* \subseteq L$. Then the largest $L'' \in \langSet_{I_r}$ such that $L''
    \cmpComposition L' \le L$ is denoted $L / L'$ and is given by
    \begin{align*}
        L / L' =& \left(L \setint L' \setunion \missingExt\left(L, L', O' \right) \right) \setminus \uncont(L, L', O', I)
        .
    \end{align*}
\end{proposition}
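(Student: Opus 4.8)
The plan is to reduce the statement to an extremal problem inside the lattice $\langSet_{I_r}$ and then check that the displayed expression, call it $F$, solves it. By the discussion preceding the proposition, $I \subseteq I'$ forces $I_r \setint I' = I$ and $I_r \setunion I' = \alphabet$, so composition restricts to $\cmpComposition : \langSet_{I'}, \langSet_{I_r} \to \langSet_I$; since the embeddings act as the identity on the underlying word sets and the meet of $\langSet_I$ is intersection (Proposition~\ref{kbx8we7d}), the condition $L'' \cmpComposition L' \le L$ is literally $L'' \setint L' \subseteq L$. Because $\langSet_{I_r}$ is closed under union and $\{L'' \in \langSet_{I_r} \mid L'' \setint L' \subseteq L\}$ is closed under arbitrary unions, this family has a greatest element; moreover the hypothesis $L' \setint I_r^* \subseteq L$ is exactly what guarantees the family is nonempty (its least candidate $I_r^*$, the bottom of $\langSet_{I_r}$, already works), equivalently that $\estring \in F$. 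It remains to prove the greatest element is $F$.

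First I would verify $F \in \langSet_{I_r}$. Prefix-closedness is routine: $L \setint L'$ is prefix-closed; every prefix of a word $w \concat \sigma \concat w' \in \missingExt(L, L', O')$ is either a prefix of $w$, hence in $L \setint L'$, or again of the form $w \concat \sigma \concat (\cdot)$; and subtracting $\uncont(L, L', O', I)$, which has the shape $S \concat \alphabet^*$, preserves prefix-closedness. For closure under a single letter $a \in I_r = I \setunion O'$ — which yields closure under $I_r^*$ by induction — I would take $w \in F$ and split into the cases $w \in \missingExt(L, L', O')$, and $w \in L \setint L'$ with either $a \in I$, or $a \in O'$ and $w \concat a \notin L'$, or $a \in O'$ and $w \concat a \in L'$. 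In every case $w \concat a$ lands in $L \setint L'$ or in $\missingExt(L, L', O')$; the only delicate case is the last, where $w \concat a \in L$ is forced by $w \notin \uncont(L, L', O', I)$ (using the witness $w' = \estring$, $\sigma = a$), after which $w \concat a \notin \uncont(L, L', O', I)$ follows because any hypothetical $\uncont$-witness for $w \concat a$ must be $w \concat a$ itself, and absorbing the letter $a \in I_r$ into the $(O' \setunion I)^*$-part of that witness would exhibit $w$ as a witness, contradicting $w \notin \uncont(L, L', O', I)$. That $F$ meets the constraint is immediate: if $w \in F \setint L'$ then $w \notin \missingExt(L, L', O')$ (such a word extends some $v \concat \sigma \notin L'$, hence lies outside $L'$ by prefix-closedness), so $w \in L \setint L' \subseteq L$; thus $F \setint L' \subseteq L$ and $F$ lies below the greatest element.

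For maximality, let $L'' \in \langSet_{I_r}$ with $L'' \setint L' \subseteq L$ and take $w \in L''$; I must show $w \in F$. If $w \in L'$ then $w \in L'' \setint L' \subseteq L$, so $w \in L \setint L'$. If $w \notin L'$, I walk along the prefixes of $w$ (all in $L''$ by prefix-closedness, starting from $\estring \in L'$ since $L'$ is nonempty and prefix-closed): the first prefix leaving $L'$ has the form $v \concat \sigma$ with $v \in L'' \setint L' \subseteq L$ and $\sigma \in O'$ (if $\sigma \in I'$ then $I'$-receptivity of $L'$ would keep $v \concat \sigma$ in $L'$), whence $w \in \missingExt(L, L', O')$. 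Finally $w \notin \uncont(L, L', O', I)$: if some prefix $s$ of $w$ were a witness, there would be $u' \in I_r^*$ and $\sigma \in O'$ with $s \concat u' \in L \setint L'$ and $s \concat u' \concat \sigma \in L' \setminus L$; but $s \in L''$ and $u' \concat \sigma \in I_r^*$, so $I_r$-receptivity of $L''$ gives $s \concat u' \concat \sigma \in L'' \setint L' \subseteq L$, a contradiction. Hence $L'' \subseteq F$, so the greatest element is $F$, i.e.\ $L / L' = F$.

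The main obstacle is the tight interplay between the $I_r$-receptivity of $F$ and the $\uncont$-avoidance step of maximality: the operator $\uncont(L, L', O', I)$ must be \emph{exactly} the set of words from which the $I_r$-receptivity obligation can be driven out of $L$ into $L'$ by a controllable run in $(O' \setunion I)^*$ followed by a single uncontrollable $O'$-move, and one has to check that the backward closure encoded by its inner existential quantifier neither removes too little (so that $F$ stays receptive while satisfying the constraint) nor too much (so that $F$ stays maximal). Getting these case analyses to line up — in particular the ``push the witness back by one letter'' argument used in both directions — is where the real work lies; the rest is bookkeeping with prefix-closure and the embeddings $\iota$.
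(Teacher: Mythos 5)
Your proof is correct, and its mathematical content coincides with the paper's: the same single-letter case analysis over $\sigma \in I$, $\sigma \in O \setint I'$, and $\sigma \in O'$, with $\missingExt(L,L',O')$ supplying the receptivity-forced extensions that leave $L'$ and $\uncont(L,L',O',I)$ pruning exactly the words whose forced $I_r$-extensions would be driven into $L' \mysetminus L$. The difference is one of organization and rigor rather than of route: the paper \emph{derives} the formula by an informal discussion of which extensions are allowed, forced, or forbidden, and then asserts that ``the expression of the proposition follows''; you instead \emph{verify} it, first securing existence of a greatest element from closure of $\langSet_{I_r}$ and of the constraint under unions (with the hypothesis $L'\setint I_r^* \subseteq L$ giving nonemptiness), then checking that $F$ is prefix-closed and $I_r$-receptive (the ``absorb the letter $a\in I_r$ into the witness'' step), that $F \setint L' \subseteq L$, and finally maximality, where the $I_r$-receptivity of an arbitrary competitor $L''$ turns any $\uncont$-witness into a violation of $L''\setint L' \subseteq L$. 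That maximality half, and the explicit check that subtracting a set of the form $S\concat\alphabet^*$ preserves prefix-closure, are only implicit in the paper, so your write-up is strictly more complete than the published proof while buying nothing different in substance.
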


We have defined receptive languages together with a preorder and a composition
operation with its adjoint. These objects will constitute our theory of
components, i.e., $\cmpUniverse = \oplus_{I \in 2^{\alphabet}} \langSet_I$.

\subsection{\Compsets and interface hypercontracts}
\label{bstxiauybo}

Using the set of components just defined, we proceed to build \compsets and
hypercontracts. The \compsets  contain components adhering to the
same IO signature. Thus, again the notion of an IO signature will partition the
set of \compsets (and the same will happen with hypercontracts). This means that for every
\compset $\cmpSet$, there will always be an $I \subseteq \alphabet$ such that
$\cmpSet \subseteq \langSet_I$.

For $I \subseteq \alphabet$, and $L \in \langSet_I$, we will consider \compsets
of the form 
$$\setArg{M \in 2^L}{I^* \subseteq M}, \mbox{ denoted by } [0 , L]\,,$$
where $0$
is $I^*$, the smallest element of $\langSet_I$,
i.e., the \compsets are all $I$-receptive
languages smaller than $L$.
We will focus on hypercontracts
whose implementations have signature $(I,O)$ and whose environments have
$(O,I)$. Thus, hypercontracts will consist of pairs $\cont = (\envSet, \sysSet)$
of $O$- and $\emptyset$-receptive \compsets, respectively. We will let 
$$\sysSet
= [0, S] = \setArg{M \in \langSet_{\emptyset}}{M \subseteq S}$$
for some $S \in
\langSet_{\emptyset}$. We will restrict the environments $E \in \envSet$ to
those that never extend a word of $S$ by an input symbol that $S$ does not
accept. The largest such environment is given by \begin{align}\label{8b7tqotbso}E_S = S \setunion
\missingExt(S, S, O).\end{align} Since $S$ is prefix-closed, so is $E_S$. Moreover,
observe that $E_S$ adds to $S$ all those strings that are obtained by
continuations of words of $S$ by an output symbol that $S$ does not produce.
This makes $E_S$ $O$-receptive. The set of environments is thus $\envSet = [O^*,
E_S]$.

Having obtained the largest environment, we can find the implementations. These
are given by $\impSet = [I^*, M_S]$ for $M_S = S / E_S$. Plugging the
definition, we have
\begin{align*}
    M_S &= \left(S \setint E_S \setunion \missingExt(S, E_S, I) \right) \setminus \uncont(S, E_S, I, \emptyset).
\intertext{There is no word of $I^*$ which can extend a word of $S$ into $E_S \setminus S$. Thus,}
    M_S &=
    S \setunion \missingExt(S, S, I)
    .
\end{align*}
Observe that $S$ and $\missingExt(S, S, I)$ are disjoint (same for
$\missingExt(S,S,O)$). Thus, $E_S \times M_S = S$. In summary, we observe that
our hypercontracts are highly structured. They are in 1-1 correspondence with a
language $S \in \langSet_\emptyset$ and an input alphabet $I \subseteq
\alphabet$, i.e., there is a set isomorphism 
\begin{equation}
\langSet_\emptyset, 2^\alphabet \xrightarrow{\sim} \catContract\,.
\label{usydtfsuy}
\end{equation}
Indeed, given $S$ and $I$, we build $E_S$ by
extending $S$ by $\alphabet \setminus I = O$, and $M_S$ by extending $S$ by $I$.
After this, the hypercontract has environments, closed systems, and
implementations $[O^*, E_S]$, $[\emptyset, S]$, and $[I^*, M_S]$, respectively.

\subsection{Hypercontract composition}
Let $S, S' \in \langSet_{\emptyset}$. We consider the composition of the
interface hypercontracts 
$
\cont_R = \cont_S \contComposition \cont_{S'}$, where $\cont_S = ([0, E_S], [0, S])$,
$\cont_{S'} = ([0, E_{S'}], [0, S'])$
and $E_S$ and $E_{S'}$ have signatures $(O,I)$ and $(O', I')$,
respectively. From the structure of interface hypercontracts, we have the
relations
\begin{align*}
    E_S &= S \setunion \missingExt(S, S, O)    \text{ and } \\
    E_{S'} &= S' \setunion \missingExt(S', S', O').
\end{align*}
Moreover, the implementations of $C, C'$ are, respectively, $\impSet = [I^*,
M_S]$ and $\impSet' = [I'^*, M_{S'}]$, where
\begin{align*}
    M_S &= S \setunion \missingExt(S, S, I)    \text{ and } \\
    M_{S'} &= S' \setunion \missingExt(S', S', I').
\end{align*}
The composition of these hypercontracts is defined if $(I, O)$ and $(I', O')$
have compatible signatures. Suppose 
$\cont_R = \cont_S \contComposition \cont_{S'} = ([0, E_R], [0, R])$ for some $R \in \langSet_\emptyset$. Then the
environments must have signature $(O \setunion O', I \setint I')$, and the
implementations $(I\setint I', O \setunion O')$. 

Finally, as usual, 
$
    E_R = R \setunion \missingExt(R, R, O \setunion O')$ and $
    M_R = R \setunion \missingExt(R,\allowbreak R, I \setint I')
$ are the maximal environment and implementation. $R$ is determined as follows:

\begin{proposition}\label{hbaixoib} Let $\cont_S$ and $\cont_{S'}$ be interface hypercontracts and let $\cont_R \defined \cont_S \contComposition \cont_{S'}$. Then
$R$ is given by the expression $$R = \left(S
    \setint S'\right) \setminus \left[ \uncont(S', S, O, O') \setunion
    \uncont(S, S', O', O) \right].$$
\end{proposition}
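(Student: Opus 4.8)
The plan is to start from the general hypercontract composition formula \eqref{eq:cntComp} and unwind it using the explicit structure of interface hypercontracts established in Section~\ref{bstxiauybo}. By that structure, $\cont_S$, $\cont_{S'}$, and $\cont_R$ are each determined (via the isomorphism \eqref{usydtfsuy}) by a single language in $\langSet_\emptyset$ together with an input alphabet; so it suffices to identify the closed-system language $R$ of the composite. Writing $\cont_S=([O^*,E_S],[I^*,M_S])$ and $\cont_{S'}=([O'^*,E_{S'}],[I'^*,M_{S'}])$, equation \eqref{eq:cntComp} tells us that $\cont_R$ is the meet of all $\cont''=(\envSet'',\impSet'')$ satisfying $\impSet\cmpComposition\impSet'\le\impSet''$ and $\envSet''\le\frac{\envSet'}{\impSet}\meet\frac{\envSet}{\impSet'}$. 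Because these $\compset$s are all of the form $[0,\cdot]$ and the lattice/composition/quotient operations on them reduce to the underlying Heyting-algebra operations on $\langSet$ (composition is $\times$ of languages, quotient is the language quotient of Proposition~\ref{nha9dbq9o}, meet is intersection), the first step is to compute $M_S\times M_{S'}$ as the lower bound on implementations, then compute $\frac{E_{S'}}{M_S}\meet\frac{E_S}{M_{S'}}$ as the upper bound on environments, and finally reconcile these through the rigid interface structure $E_R = R\cup\missingExt(R,R,O\cup O')$, $M_R = R\cup\missingExt(R,R,I\cap I')$ to solve for $R$.

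The concrete computation proceeds as follows. First I would note $M_S\times M_{S'} = \iota M_S\cap\iota' M_{S'}$, and since $M_S = S\cup\missingExt(S,S,I)$ and $M_{S'}=S'\cup\missingExt(S',S',I')$, expand this intersection into four pieces; using prefix-closedness and the disjointness observations already recorded ($S$ disjoint from $\missingExt(S,S,I)$, etc.), the cross terms should collapse so that the intersection is essentially $(S\cap S')$ together with controlled missing-extension debris, giving a lower bound whose ``core'' is $S\cap S'$. Second, I would compute $\frac{E_{S'}}{M_S}$ using the closed form of Proposition~\ref{nha9dbq9o} with $L=E_{S'}$, $L'=M_S$, and the appropriate input alphabets ($I_r = I'\cup O = (I\cap I')\cup O'$ after bookkeeping), and symmetrically $\frac{E_S}{M_{S'}}$; each is of the form $(L\cap L'\cup\missingExt(\cdots))\setminus\uncont(\cdots)$. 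Taking the meet of the two quotients intersects these, and the $\uncont$ terms combine: one quotient contributes $\uncont(S',S,O,O')$-type uncontrollable extensions, the other $\uncont(S,S',O',O)$. The claim $R=(S\cap S')\setminus[\uncont(S',S,O,O')\cup\uncont(S,S',O',O)]$ then emerges by matching the environment bound $E_R$ to $\frac{E_{S'}}{M_S}\meet\frac{E_S}{M_{S'}}$ and checking that the implied $R$ (obtained by stripping off the $O\cup O'$ missing-extensions from $E_R$) also respects the implementation lower bound $M_S\times M_{S'}\le M_R$.

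The main obstacle I anticipate is the bookkeeping around the $\uncont$ and $\missingExt$ operators: showing that the intersection of the two language quotients, each carrying its own $\missingExt$ additions and $\uncont$ subtractions relative to different io signatures, simplifies exactly to $(S\cap S')$ minus the symmetric union of the two $\uncont$ sets. In particular one must verify that the $\missingExt(E_{S'},M_S,\cdot)$ terms either lie outside $S\cap S'$ (hence are removed when we intersect down to the core) or are already absorbed, and that the two $\uncont$ sets appearing in the quotients are precisely $\uncont(S',S,O,O')$ and $\uncont(S,S',O',O)$ rather than some larger sets — this requires carefully tracking which symbols count as ``controllable/uncontrollable'' from each side ($O$ versus $O'$, with $I\cap I'$ as the shared controllable-by-neither part). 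A secondary check is that the resulting $R$ indeed lies in $\langSet_\emptyset$ (prefix-closed, which follows since $S\cap S'$ is prefix-closed and $\uncont(\cdot)$ is suffix-closed, so its complement within a prefix-closed set stays prefix-closed) and that $E_R\times M_R = R$, confirming $\cont_R$ is a well-formed interface hypercontract and that the meet in \eqref{eq:cntComp} is actually attained. Once the algebra of these operators is pinned down, everything else is a direct substitution into the formulas of Section~\ref{bstxiauybo}.
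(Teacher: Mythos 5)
Your plan follows the paper's own proof essentially step for step: bound the environments above by $(E_{S'}/M_S)\meet(E_S/M_{S'})$ and the implementations below by $M_S\times M_{S'}$, expand both via Proposition~\ref{nha9dbq9o}, and verify that each bound is exactly of the rigid interface form $E_{\hat R}$, resp.\ $M_{\hat R}$, for $\hat R=(S\setint S')\setminus[\uncont(S',S,O,O')\setunion\uncont(S,S',O',O)]$, which is precisely how the paper concludes $R=\hat R$. The one slip is the signature bookkeeping---both quotients are $(O\setunion O')$-receptive, not $I'\setunion O$- or $(I\setint I')\setunion O'$-receptive---and note that both bounds must be shown to be \emph{attained with equality} (not merely respected) for the GLB in \eqref{eq:cntComp} to collapse onto a single $R$, which is where the bulk of the paper's $\missingExt$/$\uncont$ manipulation is spent.
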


The quotient for interface hypercontracts follows from Proposition
\ref{iaxbisoudy}.

\subsection{Connection with interface automata}

Now we explore the relation of interface hypercontracts with interface
automata. Let $(I,O)$ be an IO signature. An $I$-interface automaton
\cite{alfaroHenzingerIntAutomata} is a tuple $\intAut = (Q, q_0, \to)$, where
$Q$ is a finite set whose elements we call states, $q_0 \in Q$ is the initial
state, and $\to \subseteq Q \times \alphabet \times Q$ is a deterministic
transition relation (there is at most one next state for every symbol of
$\alphabet$). We let $\intAutClass_I$ be the class of $I$-interface automata,
and $\intAutClass = \oplus_{I \in 2^\alphabet} \intAutClass_I$. In the language
of interface automata, input and output symbols are referred to as
\emph{actions}.

Given two interface automata (IA) $\intAut_i = (Q_i, q_{i,0}, \to_i) \in
\intAutClass_I$ for $i \in \{1,2\}$, we say that the state $q_1 \in Q_1$ refines
$q_2 \in Q_2$, written $q_1 \le q_2$, if
\begin{itemize}[wide=0pt]
    \item $\forall \sigma \in O, q_1' \in Q_1.\,\, q_1 \transition{\sigma}{1}
    q_1' \Rightarrow \exists q_2' \in Q_2.\,\, q_2 \transition{\sigma}{2} q_2'
    \text{ and } q_1' \le q_2'$ and
    \item $\forall \sigma \in I, q_2' \in Q_2.\,\, q_2 \transition{\sigma}{2}
    q_2' \Rightarrow \exists q_1' \in Q_1.\,\, q_1 \transition{\sigma}{1} q_1'
    \text{ and } q_1' \le q_2'$.
\end{itemize}
We say that $\intAut_1$ refines $\intAut_2$, written $\intAut_1 \le \intAut_2$,
if $q_{1,0} \le q_{2,0}$. This defines a preorder in $\intAutClass_I$.

\subsubsection{Mapping to interface hypercontracts} Suppose $\intAut = (Q, q_0,
\to) \in \intAutClass_I$. We define the language of $\intAut$, denoted
$\lang{\intAut}$, as the set of words obtained by ``playing out'' the transition
relation, i.e., $$\lang{\intAut} = \setArg{\sigma_0\sigma_1\ldots\sigma_n}{\exists
q_1, \ldots, q_{n-1}.\,\, q_i \stackrel{\sigma_i}{\to} q_{i+1} \text{ for } 0
\le i < n}.$$ Since $\lang{\intAut}$ is prefix-closed, it is an element of
$\langSet_\emptyset$.

From Section \ref{bstxiauybo}, we know that interface hypercontracts are
isomorphic to a language $S$ of $\langSet_\emptyset$ and an IO signature $I$.
The operation $\intAut \mapsto \lang{\intAut}$ maps an $I$-receptive interface automaton
$\intAut$ to a language of $\langSet_\emptyset$. Composing this map with the map (\ref{usydtfsuy})
discussed in Section \ref{bstxiauybo}, we have maps $\intAutClass \to
\langSet_\emptyset, 2^{\alphabet} \xrightarrow{\sim} \catContract$.

Thus, the interface hypercontract associated to $\intAut \in \intAutClass_I$ is
$\cont_\intAut = ( [0, \iaEnv{\intAut}], \allowbreak [0 , \lang{\intAut}] )$, where
$\iaEnv{\intAut} \in \langSet_{O}$ is given by \eqref{8b7tqotbso}. The following
result tells us that refinement of interface automata is equivalent to
refinement of their associated hypercontracts.

\begin{proposition}\label{ibgoicuign} Let $\intAut_1, \intAut_2 \in
    \intAutClass_I$. Then $\intAut_1 \le \intAut_2$ if and only if
    $\cont_{\intAut_1} \le \cont_{\intAut_2}$.
\end{proposition}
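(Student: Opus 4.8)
The plan is to route the equivalence through a purely language-theoretic reformulation of both sides, and then compare the two reformulations. Write $L_i \defined \lang{\intAut_i} \in \langSet_\emptyset$. On the automaton side I would first prove that $\intAut_1 \le \intAut_2$ holds iff for every $w \in L_1 \setint L_2$ we have $(i)$ $w\concat\sigma \in L_1 \Rightarrow w\concat\sigma \in L_2$ for all $\sigma \in O$, and $(ii)$ $w\concat\sigma \in L_2 \Rightarrow w\concat\sigma \in L_1$ for all $\sigma \in I$. For ``only if'', starting from a relation $S$ witnessing $q_{1,0} \le q_{2,0}$ (i.e. closed under the two clauses defining $\le$ on states), a short induction on $|w|$, using that $L_1,L_2$ are prefix-closed and both automata deterministic, shows $(\nextstate_1(w),\nextstate_2(w)) \in S$ for every $w \in L_1 \setint L_2$, where $\nextstate_i(w)$ is the unique state of $\intAut_i$ reached along $w$; then $(i)$ and $(ii)$ are just the two clauses read off at the pair $(\nextstate_1(w),\nextstate_2(w))$. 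For ``if'', determinism makes $\setArg{(\nextstate_1(w),\nextstate_2(w))}{w \in L_1 \setint L_2}$ closed under the clauses as soon as $(i)$--$(ii)$ hold, and it contains $(q_{1,0},q_{2,0})$ since $\estring \in L_1 \setint L_2$.

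On the hypercontract side I would unfold $\cont_{\intAut_1} \le \cont_{\intAut_2}$ via the preorder on hypercontracts together with the rigid shape of interface hypercontracts from Section~\ref{bstxiauybo}: environments $[O^*,\iaEnv{\intAut_i}]$ and implementations $[I^*,M_{L_i}]$, with $M_{L_i} = L_i \setunion \missingExt(L_i,L_i,I)$. Since $[0,\cdot]$ is monotone and injective on the receptive languages at hand, refinement reduces to the two inclusions $\iaEnv{\intAut_2} \subseteq \iaEnv{\intAut_1}$ and $M_{L_1} \subseteq M_{L_2}$. To make these comparable I would record the elementary partition, valid for any nonempty prefix-closed $L$,
\[
    \alphabet^* \;=\; L \;\sqcup\; \missingExt(L,L,I) \;\sqcup\; \missingExt(L,L,O),
\]
which holds because every word outside $L$ has a unique shortest non-prefix and that non-prefix leaves $L$ through a single symbol, lying in exactly one of $I$, $O$. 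Hence $\iaEnv{\intAut_i} = \alphabet^* \setminus \missingExt(L_i,L_i,I)$ and $M_{L_i} = \alphabet^* \setminus \missingExt(L_i,L_i,O)$, so the two inclusions become $\missingExt(L_1,L_1,I) \subseteq \missingExt(L_2,L_2,I)$ and $\missingExt(L_2,L_2,O) \subseteq \missingExt(L_1,L_1,O)$.

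It then remains to show this pair of $\missingExt$-inclusions is equivalent to the pair $(i)$--$(ii)$. The easy half: if $w \in L_1 \setint L_2$, $\sigma \in O$, $w\concat\sigma \in L_1 \setminus L_2$, then $w\concat\sigma \in \missingExt(L_2,L_2,O) \subseteq \missingExt(L_1,L_1,O)$, contradicting $w\concat\sigma \in L_1$ (disjoint from $\missingExt(L_1,L_1,O)$ by the partition); the symmetric argument handles $\sigma \in I$. The other half argues on shortest non-prefixes: given $u \in \missingExt(L_2,L_2,O)$ with shortest non-$L_2$ prefix $v\concat\sigma$, $\sigma \in O$, either some shorter prefix of $u$ already leaves $L_1$ — and the first such departure must then be via an output, since an input departure would occur at a word of $L_1 \setint L_2$ and be forbidden by $(ii)$, so $u \in \missingExt(L_1,L_1,O)$ — or no shorter prefix leaves $L_1$, so $v \in L_1 \setint L_2$ and $(i)$ forces $v\concat\sigma \notin L_1$, again giving $u \in \missingExt(L_1,L_1,O)$; the inclusion $\missingExt(L_1,L_1,I) \subseteq \missingExt(L_2,L_2,I)$ follows by the same argument with $I/O$ and $(i)/(ii)$ interchanged.

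I expect this last step to be the main obstacle. The correspondence does not split term by term — one cannot simply match ``implementation part $\leftrightarrow$ output condition'' and ``environment part $\leftrightarrow$ input condition'', because controlling $\missingExt(\cdot,\cdot,O)$ across the two automata requires the \emph{input} condition $(ii)$, and vice versa. So the shortest-non-prefix bookkeeping has to be carried out carefully and symmetrically. Everything else — the induction for the simulation characterization and the partition identity — is routine.
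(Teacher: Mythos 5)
Your proposal is correct, and it reaches the result by a genuinely different route than the paper. The paper proves the forward direction by a direct induction on word length, threading the automaton states along each word and invoking the two simulation clauses inside a three-way case split (its case ``$\sigma\in O$ and $w\notin\lang{\intAut_1}$'' contains the same first-departure-point bookkeeping you isolate); the backward direction is a coinduction that rebuilds the alternating simulation from the inclusions $\iaImp{\intAut_1}\subseteq\iaImp{\intAut_2}$ and $\iaEnv{\intAut_2}\subseteq\iaEnv{\intAut_1}$. You instead factor both sides through clean intermediate characterizations: the simulation preorder on deterministic interface automata becomes the one-step alternating trace conditions $(i)$--$(ii)$ on $L_1\setint L_2$, and hypercontract refinement becomes a pair of $\missingExt$-inclusions via the partition $\alphabet^*=L\sqcup\missingExt(L,L,I)\sqcup\missingExt(L,L,O)$, which is valid because prefix-closure makes the point of departure from $L$ unique, so each word outside $L$ leaves through exactly one symbol, lying in exactly one of $I$ and $O$. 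This buys two things the paper's proof does not make explicit: a reusable structural identity for receptive languages, and a transparent display of the cross-coupling --- the implementation inclusion needs the input clause $(ii)$ and the environment inclusion needs the output clause $(i)$ --- which in the paper is buried in one subcase of the induction and in the ``analogous argument'' left to the reader for the environment half. Both proofs ultimately rest on the same ingredients (determinism, prefix-closure, locating the first symbol at which a word exits one of the languages), so neither is shorter in essence, but yours is better modularized; just record explicitly that $[0,L]\subseteq[0,L']$ iff $L\subseteq L'$ (because $L\in[0,L]$) when reducing refinement of the compsets to inclusion of their maximal elements, and that $\estring\in\lang{\intAut}$ so the partition applies to both languages.
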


\subsubsection{Composition} 
\label{iabydanio}

Let $\intAut_1 = (Q_1, q_{1,0}, \to_1) \in \intAutClass_{I_1}$ and $\intAut_2 =
(Q_2, q_{2,0}, \to_2) \in \intAutClass_{I_2}$. The composition of the two IA is
defined if $I_1 \setunion I_2 = \alphabet$. In that case, the resulting IA,
$\intAut_1 \contComposition \intAut_2$, has IO signature $(I_1 \setint I_2, O_1
\setunion O_2)$. The elements of the composite IA are $(Q, (q_{1,0}, q_{2,0}),
\to_c)$, where the set of states and the transition relation are obtained
through the following algorithm:
\begin{itemize}[wide=0pt]
    \item Initialize $Q \coloneqq Q_1 \times Q_2$. For every $\sigma \in
    \alphabet$, $(q_1, q_2) \transition{\sigma}{c} (q_1', q_2')$ if $q_1
    \transition{\sigma}{1} q_1'$ and $q_2 \transition{\sigma}{2} q_2'$.
    \item Initialize the set of invalid states to those states where one
    interface automaton can generate an output action which the other interface
    automaton does not accept:
    \begin{align*}N \coloneqq \setArg{\begin{aligned} (q_1, q_2) \in \\ Q_1 \times Q_2 \end{aligned}}{
        \begin{aligned}
            \exists\, q_2' & \in Q_2, \sigma \in O_2\, \forall q_1' \in Q_1.\,\, \\ & q_2 \transition{\sigma}{2} q_2' \land \neg\left( q_1 \transition{\sigma}{1} q_1'\right) \text{ or}\\
            \exists\, q_1' & \in Q_1, \sigma \in O_1\, \forall q_2' \in Q_2.\,\, \\ & q_1 \transition{\sigma}{1} q_1' \land \neg\left( q_2 \transition{\sigma}{2} q_2'\right)
    \end{aligned}}.\end{align*}
    \item Also deem invalid a state such that an output action of one of the
    interface automata makes a transition to an invalid state, i.e., iterate the
    following rule until convergence:
    \begin{align*}N \coloneqq N \setunion
        \setArg{ \begin{aligned} (q_1, q_2) \in \\ Q_1 \times Q_2 \end{aligned}}{
        \begin{aligned}
            &\exists\, (q_1', q_2') \in N, \sigma \in O_1 \setunion O_2. \\ & (q_1,q_2) \transition{\sigma}{c} (q_1', q_2')
    \end{aligned}}.\end{align*}
    \item Now remove the invalid states from the IA:
    \begin{align*}Q & \coloneqq Q \setminus N \text{  and  } \\ \to_c & \coloneqq\,\, \to_c
    \setminus  \setArg{(q, \sigma, q') \in \to_c}{q \in N \text{ or } q' \in
    N}.\end{align*}
\end{itemize}
It turns out that composing IA is equivalent to composing their associated
hypercontracts:

\begin{proposition}\label{khjzgaoiui} Let $\intAut_1, \intAut_2 \in
    \intAutClass_I$. Then $\cont_{\intAut_1 \contComposition \intAut_2} =
    \cont_{\intAut_1} \contComposition \cont_{\intAut_2}$.
\end{proposition}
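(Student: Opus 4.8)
The plan is to reduce the statement to Proposition~\ref{hbaixoib} using the isomorphism~\eqref{usydtfsuy}. Assume the io signatures $(I_1,O_1)$ and $(I_2,O_2)$ of $\intAut_1$ and $\intAut_2$ are compatible, so that both $\intAut_1\contComposition\intAut_2$ and $\cont_{\intAut_1}\contComposition\cont_{\intAut_2}$ are defined. By~\eqref{usydtfsuy} an interface hypercontract is completely determined by its closed-system language $S\in\langSet_\emptyset$ together with its input alphabet $I\subseteq\alphabet$, so it suffices to check that these two data agree for $\cont_{\intAut_1\contComposition\intAut_2}$ and for $\cont_{\intAut_1}\contComposition\cont_{\intAut_2}$. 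The input alphabet is immediate: by the construction in~\S\ref{iabydanio} the automaton $\intAut_1\contComposition\intAut_2$ has signature $(I_1\setint I_2,\,O_1\setunion O_2)$, and, as recorded just before Proposition~\ref{hbaixoib}, the composite hypercontract $\cont_{\intAut_1}\contComposition\cont_{\intAut_2}$ has implementations of signature $(I_1\setint I_2,\,O_1\setunion O_2)$ as well. Writing $S=\lang{\intAut_1}$ and $S'=\lang{\intAut_2}$, Proposition~\ref{hbaixoib} says that the closed-system language of $\cont_{\intAut_1}\contComposition\cont_{\intAut_2}$ is $R=(S\setint S')\setminus[\uncont(S',S,O_1,O_2)\setunion\uncont(S,S',O_2,O_1)]$, so everything reduces to proving $\lang{\intAut_1\contComposition\intAut_2}=R$. (Agreement of the maximal environments then follows for free, since by~\eqref{8b7tqotbso} both sides recover their environment from the closed-system language and the output alphabet by the same formula.)

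To prove $\lang{\intAut_1\contComposition\intAut_2}=R$, I would unfold the pruning algorithm of~\S\ref{iabydanio}. First, the product automaton on $Q_1\times Q_2$ with synchronized transitions, taken \emph{before} any state is removed, accepts exactly $S\setint S'$: since both $\intAut_i$ are deterministic, a run of a word $v$ in the product exists iff $v$ has a run in $\intAut_1$ and a run in $\intAut_2$, i.e.\ iff $v\in S$ and $v\in S'$. Next I would characterize the invalid-state set $N$ computed by the fixpoint loop: a reachable state $(q_1,q_2)$ lies in $N$ iff the product has a path from $(q_1,q_2)$ labeled by some $w'\in(O_1\setunion O_2)^*$ to a state $(q_1',q_2')$ satisfying one of the two ``direct incompatibility'' clauses of the initial set $N$ (one automaton can emit an output action the other cannot match). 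The ``if'' direction is by induction on the length of $w'$, applying the backward-propagation rule---which only follows transitions labeled in $O_1\setunion O_2$---once per symbol; the ``only if'' direction is by induction on the iteration at which a state first enters $N$.

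It then remains to translate these transition-level conditions into statements about $S$ and $S'$, using determinism. If $w\concat w'$ leads from the initial product state to $(q_1',q_2')$---so $w\concat w'\in S\setint S'$---then the clause ``$\exists q_2''.\ q_2'\transition{\sigma}{2}q_2''$ and $\forall q_1''.\ \neg(q_1'\transition{\sigma}{1}q_1'')$'' for some $\sigma\in O_2$ is, by determinism, equivalent to $w\concat w'\concat\sigma\in S'\setminus S$, and symmetrically for $\sigma\in O_1$ and $S\setminus S'$. Combining this with the characterization of $N$, a word $v\in S\setint S'$ passes through an $N$-state during its run iff it has a prefix $w$, an extension $w'\in(O_1\setunion O_2)^*$ with $w\concat w'\in S\setint S'$, and a symbol $\sigma\in O_1$ (resp.\ $O_2$) with $w\concat w'\concat\sigma\in S\setminus S'$ (resp.\ $S'\setminus S$)---which is exactly the condition $w\in\uncont(S',S,O_1,O_2)$ (resp.\ $w\in\uncont(S,S',O_2,O_1)$) from~\eqref{Uncont}, hence $v\in\uncont(S',S,O_1,O_2)$ (resp.\ the other), the $\concat\alphabet^*$ tail accounting for the extension of $w$ to $v$. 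Since removing $N$ together with all incident transitions deletes from $S\setint S'$ precisely the words some prefix of whose run hits $N$ (and leaves a prefix-closed language), this gives $\lang{\intAut_1\contComposition\intAut_2}=(S\setint S')\setminus[\uncont(S',S,O_1,O_2)\setunion\uncont(S,S',O_2,O_1)]=R$, completing the proof.

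I expect the middle step---identifying the fixpoint set $N$ and matching it symbol-for-symbol with the two $\uncont$ sets---to be the main obstacle. Two points require care. First, the universally quantified clause ``$\forall q_1'.\ \neg(q_1\transition{\sigma}{1}q_1')$'' must be read, via determinism, as ``the current word has no $\sigma$-successor in $\intAut_1$,'' i.e.\ as a membership statement in $\alphabet^*\setminus S$; getting this translation exactly right (including at the initial state) is where most of the bookkeeping lives. Second, the backward propagation in the algorithm runs only along \emph{output} transitions of the composite, which is exactly why the witness $w'$ ranges over $(O_1\setunion O_2)^*$ and the terminal symbol $\sigma$ over the appropriate single output set in the definition of $\uncont$; lining up these quantifier ranges with~\eqref{Uncont} is the crux. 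Once these correspondences are pinned down, prefix-closure, reachability, and the $\concat\alphabet^*$ tails are routine.
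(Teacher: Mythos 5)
Your proof is correct and follows essentially the same route as the paper: both arguments reduce the claim to Proposition~\ref{hbaixoib} via the isomorphism~\eqref{usydtfsuy} and then identify $\lang{\intAut_1\contComposition\intAut_2}$ with the language $R$. The paper merely asserts this last identification (``we deduce''), whereas you actually carry it out---the fixpoint characterization of the invalid-state set $N$ and its symbol-by-symbol match with the two $\uncont$ sets---so your writeup supplies the details the paper leaves implicit rather than taking a different approach.
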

Propositions~\ref{ibgoicuign} and~\ref{khjzgaoiui} express that our model of
interface hypercontracts is equivalent to Interface Automata. We observe that
the definition for the parallel composition of interface hypercontracts is
straightforward, unlike for the Interface Automata (the latter involves the
iterative pruning of invalid states). In fact, in our case this pruning is
hidden behind the formula (\ref{Uncont}) defining the set Unc$()$.

\section{Conclusions}


We proposed hypercontracts, a generic model of contracts providing a richer
algebra than the metatheory of~\cite{BenvenisteContractBook}. We started from a
generic model of components equipped with a simulation preorder and  parallel
composition. On top of them, we considered \compsets (or hyperproperties, for behavioral formalisms), which
are lattices of sets of components equipped with parallel composition and
quotient; \compsets are our generic model formalizing ``properties.''
Hypercontracts are then defined as pairs of \compsets specifying the allowed
environments and either the obligations of the closed system or the set of
allowed implementations---both forms are useful. 

We specialized hypercontracts by restricting them to pairs of downward closed
\compsets (where downward closed refers to the component preorder), and then to
\conic hypercontracts, whose environments and closed systems are described by
a finite number of components. \Conic hypercontracts include Assume/Guarantee
contracts as a specialization. 
We illustrated the versatility of our model on the
definition of contracts for information flow in security.

The flexibility and power of our model suggests that a number of directions that
were opened in~\cite{BenvenisteContractBook}, but not explored to their end, can
now be re-investigated with more powerful tools: contracts and testing,
subcontract synthesis (for requirement engineering), contracts and abstract
interpretation, contracts in physical system modeling.\footnote{ 
\href{https://www.mathworks.com/products/simulink.html}{Simulink}
and \href{https://openmodelica.org/}{Modelica} toolsuites propose requirements
toolboxes, in which requirements are physical system properties that can be
tested on a given system model, thus providing a limited form of contract. This
motivates the development of a richer contract framework helping for requirement
engineering in Cyber-Physical Systems design.} Furthermore, contracts were also
developed in the neighbor community of control, which motivates us to establish
further links. In particular, Phan-Minh and Murray
\cite{phan2019contracts,phan2021contract} introduced the notion of reactive
contracts. Saoud et al.~\cite{DBLP:conf/eucc/SaoudGF18,saoud:hal-02196511}
proposed a framework of Assume/Guarantee contracts for input/output discrete or
continuous time systems. Assumptions vs. Guarantees are properties stated on
inputs vs. outputs; with this restriction, reactive contracts are considered and
an elegant formula is proposed for the parallel composition of contracts. 



\bibliographystyle{style/splncs04}
\bibliography{support/references}

\begin{thebibliography}{10}
\providecommand{\url}[1]{\texttt{#1}}
\providecommand{\urlprefix}{URL }
\providecommand{\doi}[1]{https://doi.org/#1}

\bibitem{AbadiLamportComposingSpecs}
Abadi, M., Lamport, L.: Composing specifications. ACM Trans. Program. Lang.
  Syst.  \textbf{15}(1),  73--132 (Jan 1993). \doi{10.1145/151646.151649},
  \url{http://doi.acm.org/10.1145/151646.151649}

\bibitem{alfaroHenzingerIntAutomata}
de~Alfaro, L., Henzinger, T.A.: Interface automata. In: Proceedings of the 8th
  European Software Engineering Conference Held Jointly with 9th ACM SIGSOFT
  International Symposium on Foundations of Software Engineering. p. 109–120.
  ESEC/FSE-9, Association for Computing Machinery, New York, NY, USA (2001).
  \doi{10.1145/503209.503226}, \url{https://doi.org/10.1145/503209.503226}

\bibitem{AlfaroH01}
de~Alfaro, L., Henzinger, T.A.: Interface theories for component-based design.
  In: Henzinger, T.A., Kirsch, C.M. (eds.) EMSOFT. Lecture Notes in Computer
  Science, vol.~2211, pp. 148--165. Springer (2001)

\bibitem{specToContract}
Bauer, S.S., David, A., Hennicker, R., Larsen, K.G., Legay, A., Nyman, U.,
  W{\k{a}}sowski, A.: Moving from specifications to contracts in
  component-based design. In: de~Lara, J., Zisman, A. (eds.) Fundamental
  Approaches to Software Engineering. pp. 43--58. Springer Berlin Heidelberg,
  Berlin, Heidelberg (2012)

\bibitem{DBLP:journals/scp/BauerLLNW14}
Bauer, S.S., Larsen, K.G., Legay, A., Nyman, U., Wasowski, A.: A modal
  specification theory for components with data. Sci. Comput. Program.
  \textbf{83},  106--128 (2014). \doi{10.1016/j.scico.2013.06.003},
  \url{https://doi.org/10.1016/j.scico.2013.06.003}

\bibitem{Benveniste2008}
Benveniste, A., Caillaud, B., Ferrari, A., Mangeruca, L., Passerone, R.,
  Sofronis, C.: Multiple Viewpoint Contract-Based Specification and Design, pp.
  200--225. Springer Berlin Heidelberg, Berlin, Heidelberg (2008).
  \doi{10.1007/978-3-540-92188-2\_9},
  \url{https://doi.org/10.1007/978-3-540-92188-2_9}

\bibitem{BenvenisteContractBook}
Benveniste, A., Caillaud, B., Nickovic, D., Passerone, R., Raclet, J.B.,
  Reinkemeier, P., Sangiovanni-Vincentelli, A., Damm, W., Henzinger, T.A.,
  Larsen, K.G.: Contracts for system design. Foundations and
  Trends$^{\text{\scriptsize{\textregistered}}}$\hspace{-.3em} in Electronic
  Design Automation  \textbf{12}(2-3),  124--400 (2018)

\bibitem{BujtorV14_Sofsem}
Bujtor, F., Vogler, W.: Error-pruning in interface automata. In: 40th
  International Conference on Current Trends in Theory and Practice of Computer
  Science. pp. 162--173. SOFSEM 2014, Nov{\'{y}} Smokovec, Slovakia (January
  26-29, 2014). \doi{10.1007/978-3-319-04298-5\_15},
  \url{http://dx.doi.org/10.1007/978-3-319-04298-5_15}

\bibitem{clarkson2010hyperproperties}
Clarkson, M.R., Schneider, F.B.: Hyperproperties. Journal of Computer Security
  \textbf{18}(6),  1157--1210 (2010)

\bibitem{ColemanJ07}
Coleman, J.W., Jones, C.B.: A structural proof of the soundness of
  rely/guarantee rules. J. Log. Comput.  \textbf{17}(4),  807--841 (2007).
  \doi{10.1093/logcom/exm030}, \url{http://dx.doi.org/10.1093/logcom/exm030}

\bibitem{DoyenHJP08}
Doyen, L., Henzinger, T.A., Jobstmann, B., Petrov, T.: Interface theories with
  component reuse. In: Proceedings of the 8th ACM {\&} IEEE International
  conference on Embedded software, EMSOFT'08. pp. 79--88. Atlanta, GA (2008)

\bibitem{6234468}
Goguen, J.A., Meseguer, J.: Security policies and security models. In: 1982
  {IEEE} Symposium on Security and Privacy, Oakland, CA, USA, April 26-28,
  1982. pp. 11--20. {IEEE} Computer Society, Oakland, CA, USA (1982).
  \doi{10.1109/SP.1982.10014}, \url{https://doi.org/10.1109/SP.1982.10014}

\bibitem{DBLP:conf/setss/HayesJ17}
Hayes, I.J., Jones, C.B.: A guide to rely/guarantee thinking. In: Bowen, J.P.,
  Liu, Z., Zhang, Z. (eds.) Engineering Trustworthy Software Systems - Third
  International School, {SETSS} 2017, Chongqing, China, April 17-22, 2017,
  Tutorial Lectures. Lecture Notes in Computer Science, vol. 11174, pp. 1--38.
  Springer (2017). \doi{10.1007/978-3-030-02928-9\_1},
  \url{https://doi.org/10.1007/978-3-030-02928-9\_1}

\bibitem{EPTCS326.14}
Incer, I., Mangeruca, L., Villa, T., Sangiovanni-Vincentelli, A.L.: The
  quotient in preorder theories. In: Raskin, J.F., Bresolin, D. (eds.) {\rm
  Proceedings 11th International Symposium on} Games, Automata, Logics, and
  Formal Verification, {\rm Brussels, Belgium, September 21-22, 2020}.
  Electronic Proceedings in Theoretical Computer Science, vol.~326, pp.
  216--233. Open Publishing Association, Brussels, Belgium (2020).
  \doi{10.4204/EPTCS.326.14}

\bibitem{Jones83}
Jones, C.B.: Specification and design of (parallel) programs. In: IFIP
  Congress. pp. 321--332. Paris, France (1983)

\bibitem{Jones2003}
Jones, C.B.: Wanted: a compositional approach to concurrency, pp. 5--15.
  Springer New York, New York, NY (2003). \doi{10.1007/978-0-387-21798-7\_1},
  \url{https://doi.org/10.1007/978-0-387-21798-7_1}

\bibitem{LarsenNW06}
Larsen, K.G., Nyman, U., Wasowski, A.: {Interface Input/Output Automata}. In:
  FM. pp. 82--97 (2006)

\bibitem{LarsenNW07ESOP}
Larsen, K.G., Nyman, U., Wasowski, A.: {Modal {I}/{O} Automata for Interface
  and Product Line Theories}. In: Programming Languages and Systems, 16th
  European Symposium on Programming, ESOP'07. Lecture Notes in Computer
  Science, vol.~4421, pp. 64--79. Springer (2007)

\bibitem{LarsenNW07CONCUR}
Larsen, K.G., Nyman, U., Wasowski, A.: {On Modal Refinement and Consistency}.
  In: Proc. of the 18th International Conference on Concurrency Theory
  (CONCUR'07). pp. 105--119. Springer (2007)

\bibitem{LuttgenV13_LMCS}
L{\"{u}}ttgen, G., Vogler, W.: Modal interface automata. Logical Methods in
  Computer Science  \textbf{9}(3) (2013). \doi{10.2168/LMCS-9(3:4)2013},
  \url{http://dx.doi.org/10.2168/LMCS-9(3:4)2013}

\bibitem{Lynch89anintroduction}
Lynch, N.A., Tuttle, M.R.: An introduction to input/output automata. CWI
  Quarterly  \textbf{2},  219--246 (1989)

\bibitem{10.1007/978-3-319-99725-4_17}
Mastroeni, I., Pasqua, M.: Verifying bounded subset-closed hyperproperties. In:
  Podelski, A. (ed.) Static Analysis. pp. 263--283. Springer International
  Publishing, Cham (2018)

\bibitem{NegulescuProcessSpaces}
Negulescu, R.: Process spaces. In: Palamidessi, C. (ed.) CONCUR 2000 ---
  Concurrency Theory. pp. 199--213. Springer Berlin Heidelberg, Berlin,
  Heidelberg (2000)

\bibitem{agContractMerging}
Passerone, R., Incer, I., Sangiovanni-Vincentelli, A.L.: Coherent extension,
  composition, and merging operators in contract models for system design. ACM
  Trans. Embed. Comput. Syst.  \textbf{18}(5s) (Oct 2019).
  \doi{10.1145/3358216}

\bibitem{phan2021contract}
Phan-Minh, T.: Contract-Based Design: Theories and Applications. Ph.D. thesis,
  California Institute of Technology (2021)

\bibitem{phan2019contracts}
Phan-Minh, T., Murray, R.M.: Contracts of reactivity. Tech. rep., California
  Institute of Technology (2019)

\bibitem{Rabe_2016}
Rabe, M.N.: A temporal logic approach to information-flow control. Ph.D.
  thesis, Universit{\"a}t des Saarlandes (2016).
  \doi{http://dx.doi.org/10.22028/D291-26650}

\bibitem{modalInterfaces}
Raclet, J.B., Badouel, E., Benveniste, A., Caillaud, B., Legay, A., Passerone,
  R.: Modal interfaces: Unifying interface automata and modal specifications.
  In: Proceedings of the Seventh ACM International Conference on Embedded
  Software. p. 87–96. EMSOFT '09, Association for Computing Machinery, New
  York, NY, USA (2009). \doi{10.1145/1629335.1629348},
  \url{https://doi.org/10.1145/1629335.1629348}

\bibitem{DBLP:journals/ejcon/Sangiovanni-VincentelliDP12}
Sangiovanni{-}Vincentelli, A.L., Damm, W., Passerone, R.: Taming dr.
  frankenstein: Contract-based design for cyber-physical systems. Eur. J.
  Control  \textbf{18}(3),  217--238 (2012). \doi{10.3166/ejc.18.217-238},
  \url{https://doi.org/10.3166/ejc.18.217-238}

\bibitem{DBLP:conf/eucc/SaoudGF18}
Saoud, A., Girard, A., Fribourg, L.: On the composition of discrete and
  continuous-time assume-guarantee contracts for invariance. In: 16th European
  Control Conference, {ECC}, June 12-15, 2018. pp. 435--440. {IEEE}, Limassol,
  Cyprus (2018). \doi{10.23919/ECC.2018.8550622},
  \url{https://doi.org/10.23919/ECC.2018.8550622}

\bibitem{saoud:hal-02196511}
Saoud, A., Girard, A., Fribourg, L.: {Assume-guarantee contracts for
  continuous-time systems} (Feb 2021),
  \url{https://hal.archives-ouvertes.fr/hal-02196511}, working paper or
  preprint

\end{thebibliography}

\appendix

\section{Proofs}
\subsection{Proofs: Hypercontracts}

\begin{proof}[Proposition \ref{iaxbisoudy}]
    \begin{align*}
        \nonumber
        \cont'' \cmpQuotient \cont &= \bigvee \setArg{\cont'}{ \cont \contComposition \cont' \le \cont''} 
        =
        \bigvee
        \setArg{(\envSet', \impSet')}{
        \left[\begin{array}{ll}
            \impSet \cmpComposition \impSet' \le \impSet'', \\
            \envSet'' \cmpComposition \impSet  \le \envSet', \text{ and}\\ 
            \envSet'' \cmpComposition \impSet' \le \envSet
        \end{array}\right]}
        \\ &=
        \left( \left( \bigvee
        \setArg{(\envSet', \impSet')}{
        \left[\begin{array}{ll}
            \impSet \cmpComposition \impSet' \le \impSet'', \\
            \envSet'' \cmpComposition \impSet  \le \envSet', \text{ and}\\ 
            \envSet'' \cmpComposition \impSet' \le \envSet
        \end{array}\right]}
        \right) ^{-1}\right)^{-1} \\ &\stackrel{\eqref{eq:recipOptIdentity}}{=}
        \left( \bigwedge
        \setArg{(\impSet', \envSet')}{
        \left[\begin{array}{ll}
            \impSet \cmpComposition \impSet' \le \impSet'', \\
            \envSet'' \cmpComposition \impSet  \le \envSet', \text{ and}\\ 
            \envSet'' \cmpComposition \impSet' \le \envSet
        \end{array}\right]}
        \right)^{-1}
        \\ &=
        \left( \bigwedge
        \setArg{(\impSet', \envSet')}{
        \left[\begin{array}{ll}
            \envSet'' \cmpComposition \impSet  \le \envSet',\\ 
            \impSet'  \cmpComposition \impSet  \le \impSet'', \text{ and}\\
            \impSet'  \cmpComposition \envSet'' \le \envSet
        \end{array}\right]}
        \right)^{-1} \\
        &=\left( (\cont'')^{-1} \contComposition \cont \right)^{-1}.\quad\quad
    \end{align*}
\end{proof}

\begin{proof}[Proposition \ref{uiqtdvo}]
    We consider contract composition. Let $\cont = (\envSet, \impSet)$ and $\cont' = (\envSet', \impSet')$ with $\envSet = [L_e, R_e]$, $\impSet = [L_i, R_i]$, $\envSet' = [L_e', R_e']$, and $\impSet' = [L_i', R_i']$. Their composition of these two contracts $\cont \contComposition \cont' = (\envSet_c, \impSet_c)$ requires us to compute
    \begin{align*}
        &\envSet_c = \left(\envSet'\contQuotient(\impSet\contQuotient\envSet)\right)\meet\left(\envSet\contQuotient(\impSet'\contQuotient\envSet')\right) \text{  and }
        \impSet_c =
        \left((\impSet\contQuotient\envSet) \cmpComposition (\impSet'\contQuotient\envSet')\right)\contQuotient\envSet_c.
    \end{align*}
    
    Since we have to compute $\impSet\contQuotient\envSet$, we must have $L_i \le L_e$. Similarly, to compute $\impSet'\contQuotient\envSet'$, we need $L_i' \le L_e'$. Now, to compute $\envSet'\contQuotient(\impSet\contQuotient\envSet)$ and $\envSet\contQuotient(\impSet'\contQuotient\envSet')$, we must have $L_e' \le L_i$ and $L_e \le L_i'$. Then
    \begin{align*}
        L_e' \le L_i \le L_e \text{ and } L_e \le L_i' \le L_e', 
    \end{align*}
    so we must have $L_e = L_i = L_i' = L_e'$ for contract composition to be well defined. To simplify notation, let $L = L_e = L_i = L_i' = L_e'$. We obtain
    \begin{align*}
        \envSet_c &= \left(\envSet'\contQuotient(\impSet\contQuotient\envSet)\right)\meet\left(\envSet\contQuotient(\impSet'\contQuotient\envSet')\right) =
        \left(\envSet'\contQuotient([L, R_i \setunion \neg R_e])\right)\meet\left(\envSet\contQuotient([L, R_i' \setunion \neg R_e'])\right) \\ &=
        [L, R_e' \cup \neg \left( R_i \setunion \neg R_e \right) ]\meet [L, R_e \setunion \neg \left( R_i' \setunion \neg R_e' \right)] \\ &=
        [L, (R_e \setint R_e') \setunion (R_e' \setint \neg R_i') \setunion (R_e \setint \neg R_i) ]
        \intertext{and}
        \impSet_c &=
        \left((\impSet\contQuotient\envSet) \cmpComposition (\impSet'\contQuotient\envSet')\right)\contQuotient\envSet_c =
        \left([L, R_i \setunion \neg R_e] \cmpComposition [L, R_i' \setunion \neg R_e']\right)\contQuotient\envSet_c \\ &=
        \left([L, (R_i \setunion \neg R_e) \setint (R_i' \setunion \neg R_e') ]\right)\contQuotient\envSet_c =
        [L, (R_i \setunion \neg R_e) \setint (R_i' \setunion \neg R_e') ].
    \end{align*}
    
    Finally, we seek expressions for the residual. Let $\cont = (\envSet, \impSet)$ and $\cont'' = (\envSet'', \impSet'')$ with $\envSet'' = [L_e'', R_e'']$, $\impSet'' = [L_i'', R_i'']$, $\envSet = [L_e, R_e]$, and $\impSet = [L_i, R_i]$. The residual $\cont'' \contQuotient \cont = (\envSet_r, \impSet_r)$ is given by
    \begin{align*}
        &\envSet_r = \envSet'' \cmpComposition (\impSet\contQuotient\envSet) \text{  and }
        \impSet_r = \left( (\envSet \contQuotient \envSet'') \meet \left((\impSet''\contQuotient \envSet'')\contQuotient(\impSet \contQuotient \envSet)\right) \right) \contQuotient \envSet_r.
    \end{align*}
    To compute $\impSet\contQuotient\envSet$, $\envSet \contQuotient \envSet''$, and $\impSet''\contQuotient \envSet''$, we must have
    \[
        L_i \le L_e \le L_e'', \text{ and } L_i'' \le L_e''.
    \]
    We compute
    \begin{align*}
        &\envSet_r = \envSet'' \cmpComposition [L_i, R_i \setunion \neg R_e] = [L_i, R_e'' \setint \left( R_i \setunion \neg R_e \right)] \text{ and}\\
        &\impSet_r = \left( (\envSet \contQuotient \envSet'') \meet \left((\impSet''\contQuotient \envSet'')\contQuotient(\impSet \contQuotient \envSet)\right) \right) \contQuotient \envSet_r = \\ &
        \left( [L_e, R_e \setunion \neg R_e''] \meet \left([L_i'', R_i'' \setunion \neg R_e'']\contQuotient [L_i, R_i \setunion \neg R_e] \right) \right) \contQuotient \envSet_r.
    \end{align*}
    We have the further constraint $L_i'' \le L_i$. Thus, $L_i'' \le L_i \le L_e \le L_e''$ and
    {
    \begin{align*}
        \impSet_r=&
        \left( [L_e, R_e \setunion \neg R_e''] \meet  [L_i'', \left(R_i'' \setunion \neg R_e''\right) \setunion \left( R_e \setint \neg R_i \right)] \right) \contQuotient \envSet_r \\ =&
        [L_e \setunion L_i'', \left(R_e \setunion \neg R_e''\right) \setint \left(\left(R_i'' \setunion \neg R_e''\right) \setunion \left( R_e \setint \neg R_i \right)\right)] \contQuotient \envSet_r \\ =&
        [L_e , \left(R_e \setint R_i''\right) \setunion \neg R_e'' \setunion \left( R_e \setint \neg R_i \right)] \contQuotient [L_i, R_e'' \setint \left( R_i \setunion \neg R_e \right)].
        \intertext{We have the additional constrraint $L_e \le L_i$. Thus, we have $L_e = L_i = L$, and we have $L_i'' \le L \le L_e''$ and}
        \impSet_r=&
        [L , \left(R_e \setint R_i''\right) \setunion \neg R_e'' \setunion \left( R_e \setint \neg R_i \right)].
        \quad\quad
    \end{align*}
    }
\end{proof}
\subsection{Proofs: Receptive languages and hypercontracts}


\begin{proof}[Proof of Proposition \ref{kbx8we7d}]
    Suppose $L, L' \in \langSet_{I}$. If $w$ is contained in $L \setint L'$, and
    $w_p$ is a prefix of $w$, then $w$ is contained in both $L$ and $L'$, and so
    is $w_p$, which means intersection is prefix-closed.
    Moreover, for any $w' \in I^*$, we have $w \concat w' \in L$ and $w \concat
    w' \in L'$, so $w \concat w' \in L \setint L'$. We conclude that $L \setint
    L' \in \langSet_{I}$.

    Similarly, if $w$ is contained in $L \setunion L'$, then we may assume that
    $w \in L$. Any prefix $w_p$ of $w$ is also contained in $L$, so $w_p \in L
    \setunion L'$, meaning that union is prefix-closed. In addition, for every
    $w' \in I^*$, we have $w \concat w' \in L$, so $w \concat w' \in L \setunion
    L'$. This means that $L \setunion L' \in \langSet_{I}$.
\end{proof}

\begin{proof}[Proof of Proposition \ref{kjgh9yrbp}]
    First we show that $L' \rightarrow L \in \langSet_I$. Let $w \in L'
    \rightarrow L$. If $w_p$ is a prefix of $w$ then $\pre(w_p) \setint L'
    \subseteq \pre(w) \setint L' \subseteq L$, so $L' \rightarrow L$ is
    prefix-closed.

    Now suppose $w \in L' \rightarrow L$ and $w \in L$. Then for $w_I \in I^*$,
    $w \concat w_I \in L$, so $\pre(w \concat w_I) \subseteq L$. Suppose $w \in L' \rightarrow L$ and $w \not\in L$. Let $n$ be the length of
    $w$. Since $w \not\in L$, $n > 0$ (the empty string is in $L$). Write $w =
    \sigma_1 \ldots \sigma_n$ for $\sigma_i \in \alphabet$. Let $k \le n$ be the
    largest natural number such that $\sigma_1\ldots\sigma_k \in L'$ (note that
    $k$ can be zero). If $k = n$, then $w \in L' \setint \pre(w) \subseteq L$,
    which is forbidden by our assumption that $w \not\in L$. Thus, $k < n$.
    Define $w_p = \sigma_1\ldots\sigma_{k+1}$. Clearly, $w_p \not\in L'$. For
    any $w_{\alphabet} \in \alphabet^*$, since $L'$ is prefix-closed, we must
    have $\pre(w\concat w_\alphabet) \setint L' = \pre(w_p) \setint L' = \pre(w)
    \setint L' \subseteq L$. We showed that any word of $L' \rightarrow L$
    extended by a word of $I^*$ remains in $L' \rightarrow L$. We conclude that
    $L' \rightarrow L \in \langSet_I$.

    Now we show that $L' \rightarrow L$ has the properties of the exponential.
    Suppose $L'' {\in} \langSet_I$ is such that $L' {\setint} L'' \subseteq L$. Let
    $w {\in} L''$. Then $\pre(w) \setint L' \subseteq L$, which means that $L''
    \le L' \rightarrow L$. On the other hand, 
    \[
        L' \setint \left( L' \rightarrow L \right) = L' \setint \setArg{w \in \alphabet^*}{\pre(w) \setint L' \subseteq L}
        \subseteq L.
    \]
    Thus, any $L'' \le L' \rightarrow L$ satisfies $L'' \setint L' \le L$. This
    concludes the proof.
\end{proof}

\begin{proof}[Proof of Proposition \ref{98wninhiu}]
From Prop.~\ref{kjgh9yrbp}, it is clear that $L \subseteq L' \rightarrow L$.
Suppose $w \in L' \setint L$. Since $L$ and $L'$ are $I$-receptive, $w\circ
\sigma \in L \setint L'$ for $\sigma \in I$. Assume $\sigma \in O$. If $w \circ
\sigma \not \in L'$, then we can extend $w \circ \sigma$ by any word $w' \in
\alphabet^*$, and this will satisfy $\pre(w\circ \sigma \circ w') \setint L' =
\pre(w) \setint L' \subseteq L$ due to the fact $L'$ is prefix-closed. If $w
\concat \sigma \in L' \setminus L$, then $w \not \in L' \rightarrow L$. Thus, we
can express the exponential using the closed-form expression of the proposition.
\end{proof}


\begin{proof}[Proof of Proposition \ref{nha9dbq9o}]
    Suppose $w \in L / L'$ and $w \in L \setint L'$. We have not lost generality
    because $\estring \in L \setint L'$. We consider extensions of $w$ by a
    symbol $\sigma$:
    \begin{enumerate}[label=\alph*.]
        \item If $\sigma \in I$, $\sigma$ is an input symbol for both $L'$ and
        the quotient.
        \begin{enumerate}[label=\roman*.]
            \item $L$ is receptive to $I$, so $w \concat \sigma \in L$;
            \item $L'$ is receptive to $I \subseteq I'$, so $w \concat \sigma
            \in L'$; and
            \item $L / L'$ must contain $w \concat \sigma$ because the quotient
            is $I_r$-receptive.
        \end{enumerate}
        
        \item If $\sigma \in O \setint I'$, then $\sigma$ is an output of the
        quotient, and an input of $L'$.
        \begin{enumerate}[label=\roman*.]
            \item $L'$ is $I'$-receptive, so $w \concat \sigma \in L'$;
            \item $\sigma$ is an output symbol for both $L$ and $L / L'$, so
            none of them is required to contain $w \concat \sigma$; and
            \item if $w \concat \sigma \in L' \setminus L$, the extension $w
            \concat \sigma$ cannot be in the quotient. Otherwise, it can.
        \end{enumerate}
        
        \item If $\sigma \in O'$, $\sigma$ is an output for $L'$ and an input
        for the quotient.
        \begin{enumerate}[label=\roman*.]
            \item Neither $L$ nor $L'$ are $O'$-receptive;
            \item $L / L'$ is $O'$-receptive, so we must have $w \concat \sigma
            \in L / L'$; and
            \item if $w \concat \sigma \in L' \setminus L$, we cannot have
            $w \concat \sigma \in L / L'$.
        \end{enumerate}
    \end{enumerate}
    Starting with a word $w$ in the quotient, statements a and b allow or
    disallow extensions of that word to be in the quotient. However, statements
    c.ii and c.iii impose a requirement on the word $w$ itself, i.e., if c.iii
    is violated, c.ii implies that $w$ is not in the quotient. Statements a.iii
    and c.ii impose an obligation on the quotient to accept  extensions by
    symbols of $I$ and $O'$; and those extensions may lead to a violation of
    c.iii. Thus, we remove from the quotient all words such that extensions of
    those words by elements of $I \setunion O'$ end up in $L' \setminus L$. The
    expression of the proposition follows from these considerations.
\end{proof}

\begin{proof}[Proof of Proposition \ref{hbaixoib}]
    From the principle of hypercontract composition, we must have
    \begin{align}
        \label{djkabhpowby}
        &E_R \le U \defined (E_{S'} / M_S ) \land (E_S / M_{S'})  \text{ and} \\
        \label{kbaodiaynoi}
        &L \defined M_{S'} \times M_S \le M_R.
    \end{align}
    Observe that the quotients $E_{S'} / M_S$ and $E_S / M_{S'}$ both have IO
    signature $O \setunion O'$, so the conjunction in \eqref{djkabhpowby} is
    well-defined as an operation of the Heyting algebra $\langSet_{O \setunion
    O'}$. We study the first element:
    \begin{align*}
        E_{S'} / M_S = & \left(E_{S'} \setint M_{S} \setunion  \missingExt(E_{S'}, M_{S}, O) \right) \setminus \\ 
        & \uncont(E_{S'}, M_{S}, O, O').
    \end{align*}
    We attempt to simplify the terms. Suppose $w \in E_{S'} \setint (M_{S}
    \setminus S)$. Then all extensions of $w$ lie in $M_{S} \setminus S$. This
    means that $\missingExt(E_{S'}, M_{S}, O) = \missingExt(E_{S'}, S,\allowbreak O)$.
    Moreover, if a word is an element of $E_{S'} \setminus S'$, all its
    extensions are in this set, as well (i.e., it is impossible to escape this
    set by extending words). Thus, $\uncont(E_{S'}, M_{S}, O, O') = \uncont(S',
    M_{S}, O, O')$. We have
    \begin{align*}
        E_{S'} / M_S = & \left(E_{S'} \setint M_{S} \setunion  \missingExt(E_{S'}, S, O) \right) \setminus \\ 
        & \uncont(S', M_{S}, O, O').
    \end{align*}
    Now we can write
    \begin{align*}
        U = &
        \left[
            \begin{aligned}
            & \left(E_{S'} \setint M_{S} \setunion  \missingExt(E_{S'}, S, O) \right) \setint \\
            & \left(E_{S} \setint M_{S'} \setunion  \missingExt(E_{S}, S', O') \right)
            \end{aligned}
        \right] \setminus
        \\ 
        & \left[ \uncont(S', M_{S}, O, O') \setunion \uncont(S, M_{S'}, O', O) \right].
    \end{align*}
    Observe that
    {\small
    \begin{align*}
        &E_{S'} \setint M_{S} \setint \missingExt(E_{S}, S', O') \\ &= 
        (S' \setunion \missingExt(S', S', O')) \setint M_{S} \setint \missingExt(E_{S}, S', O') \\ &=
        M_{S} \setint \missingExt(E_{S}, S', O') = M_{S} \setint \missingExt(S, S', O').
    \end{align*}
    }
    The last equality comes from the following fact: if a word of
    $\missingExt(E_{S}, S', O')$ is obtained by extending a word of $(E_S
    \setminus S) \setint S'$ by $O'$, the resulting word is still an element of
    $E_S$, which means it cannot be an element of $M_S$ because $M_S$ and $E_S$
    are disjoint outside of $S$.
    Therefore,
    {\small
    \begin{equation}\label{xiqndp08n}
        U =
            \begin{aligned} &
            \left[
            \begin{aligned}
            &\left(S \setint S'\right) \setunion \\
            &\left(M_{S} \setint \missingExt(S, S', O') \right) \setunion \\
            &\left(M_{S'} \setint \missingExt(S', S, O) \right) \setunion \\
            &\left(\missingExt(E_{S'}, S, O) \setint \missingExt(E_S, S', O') \right)
            \end{aligned}
            \right]
            \setminus \\ & \left[ \uncont(S', M_{S}, O, O') \setunion \uncont(S, M_{S'}, O', O) \right].
            \end{aligned}
    \end{equation} }

    We can write
    \begin{align*}
        \missingExt(&E_{S'}, S, O) \setint \missingExt(E_S, S', O') = \\
        &\missingExt(E_{S'}, S, O) \setint \missingExt(S, S', O') \;\setunion \\
        &\missingExt(S', S, O) \setint \missingExt(E_S, S', O') \;\setunion \\
        &\left( \begin{aligned} & \missingExt(\missingExt(S', S', O'), S, O) \setint \\ &\missingExt(\missingExt(S, S, O), S', O') \end{aligned}\right).
    \end{align*}
    Note that $\missingExt(E_{S'}, S, O) \setint \missingExt(S, S', O') = \allowbreak
    \missingExt(S, S, O) \setint \missingExt(\allowbreak 
    S, S', O')$. Hence
    \begin{align*}
        &\left(M_{S} \setint \missingExt(S, S', O') \right) \setunion \\ &\left(\missingExt(E_{S'}, S, O) \setint \missingExt(S, S', O') \right) \\
        &=
        \missingExt(S, S', O') \setint \left( M_{S} \setunion  \missingExt(E_{S'}, S, O) \right) \\ &=
        \missingExt(S, S', O') \setint \left( M_{S} \setunion  \missingExt(S, S, O) \right) \\ &=
        \missingExt(S, S', O').
    \end{align*}
    Finally, we observe that the set $\missingExt(\missingExt(S', S', O'), S, O)
    \setint \missingExt(\allowbreak \missingExt(S, S, O), S', O')$ must be empty since the
    words of the first term have prefixes in $S \setminus S'$, and the second in
    $S' \setminus S$. These considerations allow us to conclude that
    \begin{align*}
        U \le
            \begin{aligned} &
            \left[
            \begin{aligned}
            &\left(S \setint S'\right) \;\setunion \\
            &\missingExt(S, S', O')   \;\setunion \\
            &\missingExt(S', S, O).
            \end{aligned}
            \right]
            \setminus \\ & \bigl[ \uncont(S', M_{S}, O, O') \setunion \uncont(S, M_{S'}, O', O) \bigr].
            \end{aligned}
    \end{align*}
    To simplify the expression a step further, suppose $w \in \uncont(S', M_{S},
    O, O')$ and has a prefix in $S' \setint (M_S \setminus S)$. Then $w \not \in
    S \setint S'$. The words of $\missingExt(S, S', \allowbreak O')$ do not have prefixes in
    $S' \setminus S$, so $w \not \in \missingExt(S, S', O')$. The words of
    $\missingExt(S', S, O)$ belong to $E_S$, which is disjoint from $M_S$
    outside of $S$. Thus, $w \not \in \missingExt(S', S, O)$.
    
    We just learned that the words of $\uncont(S', M_{S}, O, O')$ having a
    prefix in $S' \setint (M_S \setminus S)$ are irrelevant for the inequality
    above. Now consider a word $w$ of $\uncont(S', M_{S}, O, O')$ with no prefix
    in $S' \setint (M_S \setminus S)$. Let $w_p$ be the longest prefix of $w$
    which is in $S \setint S'$. There is a word $w' \in (O \setunion O')^*$ and
    a symbol $\sigma \in O$ such that $w_p \concat w' \in S' \setint M_S$ and
    $w_p \concat w' \concat \sigma \in M_S \setminus S'$. Suppose $w'$ is not
    the empty string. Then we can let $\sigma'$ be the first symbol of $w'$.
    Then $w_p \concat \sigma' \in M_S \setminus S$, so $\sigma' \in O'$. But
    this means that $w \in \uncont(S, S', O', O)$. If $w'$ is empty, $w_p \in S
    \setint S'$ and $w_p \concat \sigma' \in M_S \setminus S'$. Since $\sigma
    \in O$, $w_p \concat \sigma \in \setint M_S$ if and only if it belongs to
    $S$. Thus, $w_p \concat \sigma' \in S \setminus S'$, which means that $w \in
    \uncont(S', S, O, O')$. We can thus simplify the upper bound on $E_R$ to
    \begin{align}\label{lkjnha}
        U =
            \begin{aligned} &
            \left[
            \begin{aligned}
            &\left(S \setint S'\right) \; \setunion \\
            &\missingExt(S, S', O') \;  \setunion \\
            &\missingExt(S', S, O)
            \end{aligned}
            \right]
            \setminus \\ & \left[ \uncont(S', S, O, O') \setunion \uncont(S, S', O', O) \right].
            \end{aligned}
    \end{align}
    
    Define $\hat R \defined \left(S \setint S'\right) \setminus \left[ \uncont(S', S, O, O')
    \setunion \uncont(S, S', O', O) \right]$. We want to show that $U{=}$ \linebreak $\hat R \setunion \missingExt(\hat R, \hat R, O \setunion O')$. Note that we only have to prove that
    \begin{align}
        \nonumber
        \missingExt&(\hat R, \hat R, O \setunion O') = \\
        \label{bhstdoit}
        & \begin{aligned} &
    \left[
            \begin{aligned}
            &\missingExt(S, S', O') \;  \setunion \\
            &\missingExt(S', S, O)
            \end{aligned}
            \right]
            \setminus \\ & \left[ \uncont(S', S, O, O') \setunion \uncont(S, S', O', O) \right].
        \end{aligned}
    \end{align}

    \begin{proof}[Proof of \eqref{bhstdoit}]
    Suppose $w \in \missingExt(S, S', O') \setminus [ \uncont(S', S, O, O')
    \setunion \allowbreak \uncont(S, S', O', O) ]$. Write $w = w_p \concat \sigma
    \concat w'$, where $w_p$ is the longest prefix of $w$ which lies in $S
    \setint S'$, $\sigma \in O'$, and $w' \in \alphabet^*$. $w_p \not \in
    \left[ \uncont(S', S, O, O') \setunion \uncont(S, S', O', O) \right]$
    because all its extensions would be in this set if $w_p$ were in this set,
    and we know that $w$ is not in this set. It follows that $w_p \in \hat R$
    and since $w_p \concat \sigma \not \in \hat R$, $w_p \concat \sigma$ and all
    its extensions are in $\hat R \setunion \missingExt(\hat R,\hat R, O
    \setunion O')$. Thus, $w \in \hat R \setunion \missingExt(\hat R,\hat R, O
    \setunion O')$
    
    The same argument applies when $w \in \missingExt(S', S, O) \setminus
    [ \uncont(S', S, O, O') \allowbreak \setunion \uncont(S, S', O', O) ]$.
    We conclude that the right hand side of \eqref{bhstdoit} is a subset of
    the left hand side.
    
    Now suppose that $w \in \missingExt(\hat R,\hat R, O \setunion O')$ and
    write $w = w_p \concat \sigma \concat w'$, where $w_p$ is the longest prefix
    of $w$ contained in $\hat R$, $\sigma \in O \setunion O'$, and $w' \in
    \alphabet^*$.
    From the definition of $\hat R$, $w_p \in S \setint S'$. Suppose $w_p
    \concat \sigma \in S \setint S'$. Then
    \[w_p \concat \sigma \in \left[ \uncont(S', S, O, O') \setunion \uncont(S,
    S', O', O) \right],\] which means that $w_p$ also belongs to this set
    (because $\sigma \in O \setunion O'$). This contradicts the fact that $w_p
    \in \hat R$, so our assumption that $w_p \concat \sigma \in S \setint S'$ is
    wrong. Then $w_p$ is also the longest prefix of $w$ contained in $S \setint
    S'$.
    
    Without loss of generality, assume $\sigma \in O$. Suppose $w_p \concat
    \sigma \not \in S$. Then $w \in \missingExt(S', S, O)$. Moreover, since $w_p
    \in \hat R$, $w_p \concat \sigma \not \in [ \uncont(S', S, O, O')
    \setunion \allowbreak \uncont(S, S', O', O) ]$. Since $w_p \concat \sigma \not \in
    S \setint S'$, we have $w \not \in [ \uncont(S', S, O, O') \setunion \allowbreak
    \uncont(S, S', O', O) ]$. Thus, $w$ is in the right hand set of
    \eqref{bhstdoit}.
    
    Now suppose $w_p \concat \sigma \not \in S'$. If $w_p \concat \sigma \in S$,
    then $w_p \in \uncont(S', S, O, O')$, which contradicts the fact that $w_p \in
    R$. We must have $w_p \concat \sigma \not \in S$, which we already showed
    implies that $w$ is in the right hand set of \eqref{bhstdoit}.
    
    An analogous reasoning applies to $\sigma \in O'$. We conclude that the
    right hand side of \eqref{bhstdoit} is a subset of the left hand side, and
    this finishes the proof of their equality.
    \end{proof}

    This result and \eqref{djkabhpowby} tell us that $E_R \le \hat R \setunion
    \missingExt(\hat R, \hat R, O \setunion O')$. Now we study the constraint
    \eqref{kbaodiaynoi}. We want to show that $\hat R$ yields the tightest bound
    $L \le \hat R \setunion \missingExt(\hat R, \hat R, I \setint I')$ which
    also respects the bound \eqref{djkabhpowby}.

    \begin{proof}
        Observe that $L = 
        \left( S' \setunion \missingExt(S', S', I') \right) \setint 
        \left( S \setunion \missingExt(S, S, I) \right)$.
        First we will show that $L \subseteq \hat R \setunion \missingExt(\hat
        R, \hat R, I \setint I')$. Suppose $w \in L$. Then $w$ belongs to at
        least one of the sets $(1)\; S \setint S'$, $(2)\; S \setint \missingExt(S', S',
        I')$, $(3)\; S' \setint \missingExt(S, S, I)$, or  $(4)\; \missingExt(S, S, I)
        \setint \missingExt(S', S', I')$. We analyze each case:
        \begin{enumerate}
            \item Suppose $w \in S \setint S'$. If $w \in \hat R$, then clearly
            $w \in \hat R \setunion \missingExt(\hat R, \hat R, I \setint I')$.
            Suppose $w \not \in \hat R$. Then there is word $w' \in (O \setunion
            O')^*$ and either a symbol $\sigma \in O$ such that $w \concat w'
            \concat \sigma \in S \setminus S'$ or a symbol $\sigma \in O'$ such
            that $w \concat w' \concat \sigma \in S' \setminus S$. Write $w =
            w_p \concat w''$ such that $w''$ is the longest suffix of $w$ which
            belongs to $O \setunion O'$. It follows that the last symbol of
            $w_p$ is an element of $I \setint I'$. Since $w \not \in \hat R$,
            neither does $w_p$. This shows that for every word $w_r \concat
            \sigma_r \in S\setint S'$ such that $w_r \in \hat R$ but $w_r
            \concat \sigma_r \not \in \hat R$, we must have $\sigma_r \in I
            \setint I'$.
    
            Let $w_p'$ be the longest prefix of $w_p$ which lies in $\hat R$. By
            assumption, $\hat R$ is not empty. If we write $w = w_p' \concat
            w''$, the first symbol of $w''$ is in $I \setint I'$. Thus, $w \in
            \missingExt(\hat R, \hat R, I \setint I')$.
            
            \item Observe that $\missingExt(S', S', I') = \missingExt(S', S', I'
            \setint I) \setunion \missingExt(S', S', I' \setint O)$.
            Moreover, $S \setint \missingExt(S', S', I \setint I') \subseteq
            \missingExt(S\setint S', S\setint S', I \setint I') \subseteq
            \missingExt(\hat R, \hat R, I \setint I')$.
    
            Suppose $w \in S \setint \missingExt(S', S', I' \setint O)$. Then $w
            \in \uncont(S', S, O, O')$, so $w \not \in \hat R$. Let $w_i$ be the
            longest prefix of $w$ which lies in $S \setint S'$. Then $w_i \not
            \in \hat R$, either. Let $w_p$ be the longest prefix of $w_i$ which
            is in $\hat R$. Then $w_i \in
            \missingExt(\hat R, \hat R, I \setint I')$, and therefore, so does
            $w$.
    
            \item If $w \in S' \setint \missingExt(S, S, I)$, an analogous reasoning
            applies.
            
            \item Suppose $w \in \missingExt(S, S, I) \setint \missingExt(S',
            S', I')$. If $w$ has a prefix in $S' \setint \missingExt(S, S, I)$
            or $S \setint \missingExt(S', S', I')$, then the reasoning of the
            last two points applies, and we have $w \in \missingExt(\hat R, \hat
            R, I \setint I')$.
            Suppose $w$ has no such a prefix, and write $w = w_p \concat w'$,
            where $w_p$ is the longest prefix of $w$ which lies in $S \setint
            S'$. Let $\sigma$ be the first symbol of $w'$. Then $w_p \concat
            \sigma \in \missingExt(S, S, I) \setint \missingExt(S', S', I')$,
            which means that $\sigma \in I \setint I'$. Thus, $w \in
            \missingExt(S\setint S', S\setint S', I \setint I') \subseteq
            \missingExt(\hat R, \hat R, I \setint I')$.
        \end{enumerate}
        We have shown that $L \subseteq \hat R \setunion \missingExt(\hat R,
        \hat R, I \setint I')$. Now suppose $w \in \hat R \setunion
        \missingExt(\hat R, \hat R, I \setint I')$. If $w \in \hat R$ then
        clearly $w \in S \setint S' \subseteq L$. Suppose $w \in
        \missingExt(\hat R, \hat R, I \setint I')$ and let $w_r$ be the longest
        prefix of $w$ contained in $\hat R$ and $\sigma_r$ the symbol that comes
        immediately after $w_r$ in $w$. Clearly $\sigma_r \in I \setint I'$.
    
        If $w_r \concat \sigma_r \in S \setminus S'$, then $w_r \concat
        \sigma_r$ cannot be an element of $\hat R$. If it were, we would have
        $E_{\hat R} \times S \not \subseteq E_{S'}$, violating the bound
        \eqref{djkabhpowby}. The same applies when $w_r \concat \sigma_r \in S'
        \setminus S$.
        
        If $w_r \concat \sigma_r \not \in S \setunion S'$, then $w \concat
        \sigma_r \in \missingExt(S', S', I') \setint \missingExt(S, S, I) \subseteq L$.

        If $w_r \concat \sigma_r \in S \setint S'$, then $w_r \concat \sigma_r
        \in \uncont(S', S, O, O') \setunion \uncont(S, S', O', O)$, which means
        that $w_r \concat \sigma_r$ is not allowed to be an element of $\hat R$;
        otherwise, there would be a contradiction of \eqref{djkabhpowby}.
    \end{proof}
We conclude that $R = \hat R$.
\end{proof}

\begin{proof}[Proof of Proposition \ref{ibgoicuign}]
    Suppose that $\intAut_1 \le \intAut_2$. We want to show that
    $\iaImp{\intAut_1} \le \iaImp{\intAut_2}$ and $\iaEnv{\intAut_2} \le
    \iaEnv{\intAut_1}$. We proceed by induction in the length $n$ of words,
    i.e., we will show that this relations hold for words of arbitrary length.
    
    Consider the case $n = 1$. Suppose $\sigma \in \iaImp{\intAut_1} \setint
    \alphabet$. If $\sigma \in I$, then $\sigma \in \iaImp{\intAut_2}$ because
    of $I$-receptivity. If $\sigma\in O$, then $\sigma \in \lang{\intAut}$, so
    there exists $q_1 \in Q_1$ such that $q_{1,0} \transition{\sigma}{1} q_1$,
    which means that there exists $q_2 \in Q_2$ such that $q_{2,0}
    \transition{\sigma}{2} q_2$. Thus, $\sigma \in \lang{\intAut_2} \subseteq
    \iaImp{\intAut_2}$. We have shown that $\iaImp{\intAut_1} \subseteq
    \iaImp{\intAut_2}$ for $n = 1$. An analogous reasoning shows that
    $\iaEnv{\intAut_2} \subseteq \iaEnv{\intAut_1}$

    Suppose the statement is true for words of length $n$. Let $w
    \concat \sigma \in \iaImp{\intAut_1}$, where $w \in \alphabet^*$ is a word
    of length $n$, and $\sigma \in \alphabet$. By the inductive assumption, $w
    \in \iaImp{\intAut_2}$. 
    \begin{itemize}[wide=0pt]
        \item 
    If $\sigma \in I$, then $w \concat \sigma \in
    \iaImp{\intAut_2}$ due to $I$-receptiveness.
    \item
    Let $\sigma \in O$ and $w \not \in \lang{\intAut_1}$. Then we can write $w =
    w_p \concat w'$, where $w_p$ is the longest prefix of $w$ which lies in
    $\lang{\intAut_1}$ (suppose it has length $l$). Let $\sigma'$ be the first
    symbol of $w'$; clearly $\sigma' \in I$. Since $w \in \lang{\intAut_2}$
    and this set is prefix-closed, $w_p \in \lang{\intAut_2}$. Since $w_p \in
    \lang{\intAut_1} \setint \lang{\intAut_2}$, there exist $\{q_{j,i} \in
    Q_j\}_{i = 1}^k$ (for $j \in \{1, 2\}$) such that $q_{j, i-1}
    \transition{w_i}{j} q_{j, i}$ for $0 < i \le k$, where $w_i$ is the $i$-th
    symbol of $w_p$. Since the IA are deterministic, we must have $q_{1,i} \le
    q_{2, i}$. Suppose there were a $q_2 \in Q_2$ such that $q_{2, k}
    \transition{\sigma'}{2} q_2$; since $q_{1,k} \le q_{2,k}$ and
    $\sigma' \in I$, this would mean that there exists $q_1 \in Q_1$ such that
    $q_{1, k} \transition{\sigma'}{1} q_1$, which would mean that $w_p \concat
    \sigma' \in \lang{\intAut_1}$, a contradiction. We conclude that such $q_2$
    does not exist, which means that $w_p \concat \sigma' \not \in
    \lang{\intAut_2}$, which means that $w \concat \sigma \in \iaImp{\intAut_2}$
    because of I-receptiveness.
    \item
    Finally, if $\sigma \in O$ and $w \in \lang{\intAut_1}$, then there exist
    $\{q_{1,i} \in Q_1\}_{i = 1}^n$ such that $q_{1,i-1} \transition{w_i}{1}
    q_{1, i}$ for $0 < i \le n$, where $w_i$ is the $i$-th symbol of $w$. Since
    $w \concat \sigma \in \iaImp{\intAut_1}$, $w \in \lang{\intAut_1}$, and
    $\sigma \in O$, we must have $w \concat \sigma \in \lang{\intAut_1}$. This
    means that there must exist $q_{1,n+1} \in Q_1$ such that $q_{1,n}
    \transition{\sigma}{1} q_{1,n+1}$. We know that $w \in \iaImp{\intAut_2}$ by
    the induction assumption. If $w \not \in \lang{\intAut_2}$, then clearly $w
    \concat \sigma \in \iaImp{\intAut_2}$. If $w \in \lang{\intAut_2}$, there
    are states $\{q_{2,i} \in Q_2\}_{i = 1}^n$ such that $q_{2,i-1}
    \transition{w_i}{2} q_{2, i}$ for $0 < i \le n$. Moreover, there exists
    $q_{n+1} \in Q_1$ such that $q_{n} \transition{\sigma}{1} q_{n+1}$ and
    $q_{1,n} \le q_{2,n}$, there must be a $q_{2,n+1} \in Q_2$ such that
    $q_{2,n} \transition{\sigma}{2} q_{2,n+1}$, which means that $w \concat
    \sigma \in \iaImp{\intAut_2}$.
    \end{itemize}
    We have shown that $\iaImp{\intAut_1} \subseteq \iaImp{\intAut_2}$. An
    analogous argument proves that $\iaEnv{\intAut_2} \subseteq
    \iaEnv{\intAut_1}$.
    
    Now suppose that $\iaImp{\intAut_1} \subseteq \iaImp{\intAut_2}$ and
    $\iaEnv{\intAut_2} \subseteq \iaEnv{\intAut_1}$. We want to show that
    $q_{1,0} \le q_{2,0}$. We proceed by coinduction.


    Let $n$ be a natural number. Suppose there exist sets $\{q_{j,i} \in
    Q_j\}_{i = 1}^n$ with $j \in \{1,2\}$ such that $q_{1,i} \le q_{2,i}$ for
    all $i$ and a word $w$ of length $n$ such that $q_{j,i-1}
    \transition{w_i}{j} q_{j,i}$ for $0 < i \le n$. Suppose there exists
    $q_{1,n+1} \in Q_1$ and $\sigma \in O$ such that $q_{1, n}
    \transition{\sigma}{1} q_{1,n+1}$. Then $w \concat \sigma \in
    \iaImp{\intAut_1} \subseteq \iaImp{\intAut_2}$. Observe that $w \in
    \lang{\intAut_2}$, so we must have $w \concat \sigma \in \lang{\intAut_2}$.
    This means there must be a $q_{2, n+1} \in Q_2$ such that $q_{2, n}
    \transition{\sigma}{2} q_{2, n+1}$. We assume that $q_{1, n+1} \le q_{2,
    n+1}$. Similarly, suppose there exists $q_{2, n+1}' \in Q_2$ and $\sigma \in
    I$ such that $q_{2, n} \transition{\sigma}{2} q_{2, n+1}'$. Then $w \concat
    \sigma \in \iaEnv{\intAut_2} \subseteq \iaEnv{\intAut_1}$. Since $w \in
    \lang{\intAut_1}$, we must have $w \concat \sigma \in \lang{\intAut_1}$.
    Thus, there must exist $q_{1, n+1}' \in Q_1$ such that $q_{1,n}
    \transition{\sigma}{1} q_{1,n+1}'$. We assume that $q_{1,n+1} \le
    q_{2,n+1}$.
    This finished the coinductive proof.
\end{proof}

\begin{proof}[Proof of Proposition \ref{khjzgaoiui}]
    Let $\intAut_i$ have IO signatures $(I_i, O_i)$ for $i \in \{1, 2\}$. For
    composition to be defined, we need $I_1 \setunion I_2 = \alphabet$. Let
    $\cont_{\intAut_i}$ be the interface contract associated with $\intAut_i$. From
    Proposition \ref{hbaixoib} and Section \ref{bstxiauybo}, the composition
    $\cont_{\intAut_1} \contComposition \cont_{\intAut_2}$ is isomorphic to $I_1 \setint
    I_2$ and the $\langSet_\emptyset$ language $R = (\lang{\intAut_1} \setint
    \lang{\intAut_2}) \setminus \left[ \uncont(\lang{\intAut_1},
    \lang{\intAut_2}, O_2, O_1) \setunion \uncont(\lang{\intAut_2}, \lang{\intAut_1},
    O_1, O_2) \right]$. From Section \ref{iabydanio}, we deduce that
    $\lang{\intAut_1 \contComposition \intAut_2} = R$. The proposition follows.
\end{proof}

\end{document}